\newtheorem{lemma}{Lemma}
\newtheorem{theorem}{Theorem}
\newenvironment{remark}[1][Remark]{\begin{trivlist}
\item[\hskip \labelsep {\bfseries #1}]}{\end{trivlist}}
\definecolor{darkgreen}{rgb}{0,0.7,0}
\newcommand{\kibitz}[2]{\ifnum\Comments=1\textcolor{#1}{#2}\fi}
\newcommand{\cut}[1]{}
\newif\ifbetterbound
\newcommand{\newbound}[2]{%
\ifbetterbound%
#2%
\else%
#1%
\fi}
\newif\iffullversion
\title{Audit Games
\thanks{This work was partially supported by the U.S. Army Research Office contract “Perpetually
Available and Secure Information Systems” (DAAD19-02-1-0389) to Carnegie Mellon
CyLab, the NSF Science and Technology Center TRUST, the NSF CyberTrust grant “Privacy,
Compliance and Information Risk in Complex Organizational Processes,” the AFOSR
MURI “Collaborative Policies and Assured Information Sharing,”, HHS Grant no. HHS
90TR0003/01 and NSF CCF-1215883. Jeremiah Blocki was also partially supported by a NSF Graduate Fellowship.
Arunesh Sinha was also partially supported by the CMU CIT Bertucci Fellowship. The views
and conclusions contained in this document are those of the authors and should not be interpreted
as representing the official policies, either expressed or implied, of any sponsoring
institution, the U.S. government or any other entity. }
}
\author{Jeremiah Blocki, Nicolas Christin, Anupam Datta, Ariel D. Procaccia, Arunesh Sinha \\
Carnegie Mellon University, Pittsburgh, USA \\
\{jblocki@cs, nicolasc@, danupam@, arielpro@cs, aruneshs@\}.cmu.edu}
\begin{document}

\maketitle

\begin{abstract}
Effective enforcement of laws and policies requires expending resources to
prevent and detect offenders, as well as appropriate punishment schemes to
deter violators. In particular, enforcement of privacy laws and policies in
modern organizations that hold large volumes of personal information (e.g.,
hospitals, banks, and Web services providers) relies
heavily on internal audit mechanisms. We study economic considerations in the
design of these mechanisms, focusing in particular on effective resource
allocation and appropriate punishment schemes. We present an audit game model
that is a natural generalization of a standard security game model for resource
allocation with an additional punishment parameter. Computing the Stackelberg
equilibrium for this game is challenging because it involves solving an
optimization problem with non-convex quadratic constraints.  We present an
additive FPTAS that efficiently computes a solution that is arbitrarily close
to the optimal solution.
\end{abstract}

\section{Introduction}

In a seminal paper, Gary Becker~\shortcite{Becker68} presented a compelling
economic treatment of crime and punishment. He demonstrated that        
effective law enforcement involves optimal resource allocation to       
prevent and detect violations, coupled with appropriate punishments     
for offenders. He described how to optimize resource allocation by      
balancing the societal cost of crime and the cost incurred by 
prevention, detection and punishment schemes. While 
Becker focused on crime and punishment in society, similar economic     
considerations guide enforcement of a wide range of policies. In this   
paper, we study effective enforcement mechanisms for this broader       
set of policies. Our study differs from Becker's in two significant     
ways---our model accounts for \emph{strategic interaction} between      
the enforcer (or defender) and the adversary; and we design efficient   
algorithms for \emph{computing} the optimal resource allocation for     
prevention or detection measures as well as punishments. At the same    
time, our model is significantly less nuanced than Becker's, thus       
enabling the algorithmic development and raising interesting questions  
for further work.                                                       

A motivating application for our work is auditing, which typically involves
detection and punishment of policy violators. In particular, enforcement of 
privacy laws and policies in modern organizations that hold large volumes of 
personal information (e.g., hospitals, banks, and Web services providers like 
Google and Facebook) relies heavily on internal audit mechanisms. Audits are 
also common in the financial sector (e.g., to identify fraudulent transactions),
in internal revenue services (e.g., to detect tax evasion), and in traditional law 
enforcement (e.g., to catch speed limit violators). 

The audit process is an interaction between two agents: a defender (auditor)
and an adversary (auditee). As an example, consider a hospital (defender)
auditing its employee (adversary) to detect privacy violations committed by the
employee when accessing personal health records of patients.  While privacy
violations are costly for the hospital as they result in reputation loss and
require expensive measures (such as privacy breach notifications), audit
inspections also cost money (e.g., the cost of the human auditor involved in
the investigation). Moreover, the number and type of privacy violations depend
on the actions of the \emph{rational} auditee---employees commit 
violations that benefit them.

\cut{ In a seminal paper, the economist Gary Becker~\shortcite{Becker68}
identified three important properties of effective law enforcement: preventing
violations, finding violators and punishing violators. Becker explained the
economics of crime and law enforcement based on the cost of the crime for
society and the cost incurred by society in employing the three means of law
enforcement. While Becker focused on crime and punishment in the society, the
same properties are relevant for any kind of policy enforcement. In this paper,
we study the economics and algorithmic aspects of combining punishment with
detection of violators to obtain effective policy enforcement mechanisms. As a
result of our abstract treatment of the problem, this problem is equivalent to
the problem of combining punishments with prevention mechanisms.

The task of finding and punishing violators is known as auditing; it is
ubiquitous in organizations that handle private information (e.g., hospitals,
banks, web based service providers like Google, Facebook), finance industry (to
detect fraudulent transactions~\cite{}), in taxation (tax audits), traffic
policing (catching speeding vehicles), etc. Audits are particularly important
in scenarios where prevention is not a feasible option. For example, denying
access to a patient health record in a hospital can result in death, inspecting
each financial transaction in real time degrades performance, and installing a
speed control device on each vehicle infringes on personal freedom. 

The audit process is an interaction between two agents: a defender (auditor)
and an adversary (auditee). As an example, consider a hospital (defender)
auditing its employee (adversary) to detect privacy violations committed by the
employee when accessing personal health records of patients.  While privacy
violations are expensive for the hospital as they result in reputation loss and
require costly measures (such as privacy breach notifications), audit
inspections also cost money (e.g., the cost of the human auditor involved in
the investigation). Moreover, the number and type of privacy violation depend
on the actions of the \emph{rational} auditee---employees commit those
violations that have higher benefit for them.  }

\subsection{Our Model}

We model the audit process as a game between a defender (e.g, a hospital) and
an adversary (e.g., an employee). The defender audits a given set of
targets (e.g., health record accesses) and the adversary chooses a
target to attack.  
The defender's action space in the audit game includes two components. First,
the allocation of its inspection resources to targets; this component also
exists in a standard model of security games~\cite{tambe}.  Second, we
introduce a continuous punishment rate parameter that the defender employs to
deter the adversary from committing violations. However, punishments are not
free and the defender incurs a cost for choosing a high punishment level. 
For instance, 
 a negative work environment in a hospital with high
fines for violations can lead to a loss of productivity                 
(see~\cite{Becker68} for a similar account of the cost of punishment).  
The adversary's utility includes the benefit from committing            
violations and the loss from being punished if caught by the            
defender. Our model is parametric in the utility functions. Thus,       
depending on the application, we can instantiate the model to either    
allocate resources for detecting violations or preventing them.
This generality implies that our model can be used to       
study all the applications previously described in the security games     
literature~\cite{tambe}.                                                


To analyze the audit game, we use the Stackelberg 
equilibrium solution concept~\cite{von1934marktform} in
which the defender commits to a strategy, and the adversary plays an optimal
response to that strategy. This concept captures situations in which 
the adversary learns the defender's audit strategy through surveillance or
the defender publishes its audit algorithm. 
In addition to yielding a better payoff for the defender than any Nash equilibrium,
the Stackelberg equilibrium makes the choice for the
adversary simple, which leads to a more predictable outcome of the game.
Furthermore, this equilibrium concept respects the
computer security principle of avoiding ``security through obscurity''---
audit mechanisms like cryptographic algorithms should provide security despite being publicly
known. 


\subsection{Our Results}
Our approach to computing the Stackelberg equilibrium is based on the multiple
LPs technique of Conitzer and Sandholm~\shortcite{ConitzerS06}. However, due to
the effect of the punishment rate on the adversary's utility, the optimization
problem in audit games has quadratic and non-convex constraints. The
non-convexity does not allow us to use any convex optimization methods, and in
general polynomial time solutions for a broad class of non-convex optimization
problems are not known~\cite{CambridgeJournals:227247}.

However, we demonstrate that we can efficiently obtain an additive
approximation to our problem. Specifically, we present an additive fully
polynomial time approximation scheme (FPTAS) to solve the audit game
optimization problem. Our algorithm provides a $K$-bit precise output in
time polynomial in $K$.  Also, if the solution is rational, our algorithm
provides an exact solution in polynomial time. In general, the
exact solution may be irrational and may not be representable in a finite
amount of time. 

\subsection{Related Work}

Our audit game model is closely related to
security games~\cite{tambe}.  There are many papers (see,
e.g.,~\cite{KorzhykCP10,PitaTKCS11,PitaJOPTWPK08}) on security games, and as
our model adds the additional continuous punishment parameter, all
the variations presented in these papers can be studied in the context
of audit games (see Section~\ref{sec:disc}). However, the audit game solution
is technically more challenging as it involves non-convex constraints. 

An extensive body of work on auditing focuses on analyzing logs
for detecting and explaining violations using techniques based on
logic~\cite{VaughanJMZ08,GargJD11} and machine
learning~\cite{Zheng06statisticaldebugging,Bodik2010}. In contrast,
very few papers study economic considerations in auditing strategic 
adversaries.  Our work is inspired in part by the model proposed in one such paper~\cite{BCDS12}, which also takes the point of view of commitment and
Stackelberg equilibria to study auditing. However, the emphasis in that work is 
on developing a detailed model and using it to predict observed audit practices 
in industry and the effect of public policy interventions on auditing practices. 
They do not present efficient algorithms for computing the optimal audit strategy. 
In contrast, we work with a more general and simpler model and present an efficient 
algorithm for computing an approximately optimal audit strategy. Furthermore,
since our model is related to the security game model, it opens up the possibility 
to leverage existing algorithms for that model and apply the results to 
the interesting applications explored with security games. 


 
\citeauthor{ZhaoJ08}~\shortcite{ZhaoJ08} model a specific audit strategy---``break the glass''----where 
agents are permitted to violate an access control policy at their discretion
(e.g., in an emergency situation in a hospital), but these actions are 
audited. They manually analyze specific utility functions 
and obtain closed-form solutions for the audit strategy that results in a 
Stackelberg equilibrium. In contrast, our results apply to any utility function
and we present an efficient algorithm for computing the audit strategy.



\section{The Audit Game Model} \label{notation}

The audit game features two players: the defender ($D$), and the        
adversary ($A$). The defender wants to audit $n$ targets $t_1, \ldots,  
t_n$, but has limited resources which allow for auditing only one of    
the $n$~targets. Thus, a pure action of the defender is to choose which 
target to audit.                                                        
A randomized strategy is a vector of probabilities 
$p_1, \ldots, p_n$ of each target being audited. The adversary attacks one 
target such that given the defender's strategy the adversary's choice 
of violation is the best response.  

Let the utility of the defender be $U^a_D (t_i)$ when audited target $t_i$ was 
found to be attacked, 
and $U^u_D (t_i)$ when unaudited target $t_i$ was found to be attacked.
The attacks (violation) on unaudited targets are discovered by 
an external source (e.g. government, investigative journalists,...).
Similarly, define the 
utility of the attacker as $U^a_A (t_i)$ when the attacked target $t_i$ is audited, 
and $U^u_A (t_i)$ when attacked target $t_i$ is not audited, excluding
any punishment imposed by the defender.
Attacks discovered externally are costly for the defender, thus, $U^a_D (t_i) > U^u_D (t_i)$.
Similarly, attacks not
discovered by internal audits are more beneficial to the attacker, 
and $U^u_A (t_i) > U^a_A (t_i)$. 

The model presented so far is identical to security games with singleton
and homogeneous schedules, and a single resource~\cite{KorzhykCP10}. The
additional component in audit games is punishment. The defender chooses
a punishment ``rate'' $x \in [0,1]$ such that if auditing detects an
attack, the attacker is fined an amount~$x$. However, punishment is not
free---the defender incurs a cost for punishing, 
e.g., for creating a fearful environment. For ease of exposition, we
model this cost as a linear function $ax$, where $a > 0$; however, our
results directly extend to any cost function polynomial in $x$. Assuming
$x \in [0,1]$ is also without loss of generality as utilities can be
scaled to be comparable to $x$. We do assume the punishment rate is
fixed and deterministic; this is only natural as it must correspond to a
consistent policy.

We can now define the full utility functions. Given probabilities 
$p_1, \ldots, p_n$ of each target being audited, the utility of the 
defender when target $t_*$ is attacked is 
$$
p_{*}U^a_D (t_*) + (1 - p_{*})U^u_D (t_*) - ax.
$$ 
The defender pays a fixed cost $ax$ regardless of the outcome. 
In the same scenario, the 
utility of the 
attacker when target $t_*$ is attacked is 
$$
p_{*}(U^a_A (t_*) - x) + (1 - p_{*})U^u_A (t_*).
$$
The attacker suffers the punishment $x$ only when attacking an audited target. 

\medskip
\noindent\textbf{Equilibrium.} 
The Stackelberg equilibrium solution involves a commitment by the defender to a strategy (with a possibly randomized allocation of the resource), followed by the best response of the adversary.
The mathematical problem involves solving multiple optimization 
problems, one each for the case when attacking $t_*$ is in fact the best response of 
the adversary. Thus, assuming $t_{*}$ is the best response of the adversary, the 
$*^{th}$ optimization problem $P_*$ in audit games is
$$
\begin{array}{llc}
\displaystyle\max_{p_{i},x} & p_{*}U^a_D (t_*) + (1 - p_{*})U^u_D (t_*) - ax \ ,&\\
\mbox{subject to}& \forall i \neq *. \ p_{i}(U^a_A (t_i) -x) + (1 - p_{i})U^u_A (t_i) & \\
&  \ \ \ \ \ \ \ \ \ \ \ \ \ \ \ 
\leq p_{*}(U^a_A (t_*) - x) + (1 - p_{*})U^u_A (t_*)  \ ,&\\
& \forall i. \ 0 \leq p_i \leq 1 \ , &\\
& \sum_i p_i = 1 \ , & \\
&  0 \leq x \leq 1 \ .
\end{array}
$$
The first constraint verifies that attacking $t_*$ is indeed a best response. The auditor then solves the $n$ problems $P_1, \ldots, P_n$ (which correspond to the cases where the best response is $t_1,\ldots,t_n$, respectively), and chooses the best solution among 
all these solutions to obtain the final strategy to be used for auditing. This is a generalization of the multiple LPs approach of Conitzer and Sandholm~\shortcite{ConitzerS06}.

\medskip
\noindent\textbf{Inputs.} The inputs to the above problem are specified in $K$ bit 
precision. Thus, the total length of all inputs is $O(nK)$. Also, the model can be made more 
flexible by including a dummy target for which all associated costs are zero (including 
punishment); such a target models the possibility that the adversary does not 
attack any target (no violation). Our result stays the same with such a dummy target, but,
an additional edge case needs to be handled---we discuss this case in a remark
at the end of Section~\ref{algorithm}.

\section{Computing an Audit Strategy}
Because the indices of the set of targets can be arbitrarily permuted,
without loss of generality we focus on one optimization problem
$P_n$ ($* = n$) from the multiple optimization problems presented
in Section~\ref{notation}. The problem has quadratic and non-convex
constraints. The non-convexity can be readily checked by writing the
constraints in matrix form, with a symmetric matrix for the quadratic
terms; this quadratic-term matrix is indefinite.

However, for a fixed $x$, the induced problem is a linear programming
problem. It is therefore tempting to attempt a binary search over values
of $x$. This na\"ive approach does not work, because the solution may
not be single-peaked in the values of $x$, and hence choosing the right 
starting point for the binary search is a difficult problem.
Another na\"ive approach is to discretize the interval $[0,1]$ into steps 
of $\epsilon'$,
solve the resultant LP for the $1/\epsilon'$ many discrete values
of $x$, and then choose the best solution. As an LP can be solved in
polynomial time, the running time of this approach is polynomial in
$1/\epsilon'$, but the approximation factor is at least $a \epsilon'$
(due to the $ax$ in the objective). Since $a$ can be as large as $2^K$,
getting an $\epsilon$-approximation requires $\epsilon'$ to be $2^{-K}
\epsilon$, which makes the running time exponential in $K$. Thus, this
scheme 
cannot yield an FPTAS. 

\subsection{High-Level Overview}

Fortunately, the problem $P_{n}$ has another 
property that allows for efficient methods. Let us rewrite $P_{n}$ 
in a more compact form.
Let 
$\Delta_{D,i} = U^a_D (t_i)  - U^u_D (t_i)$, $\Delta_{i} = 
U^u_A (t_i) - U^a_A (t_i)$ and $\delta_{i,j} = U^u_A (t_i) - U^u_A (t_j)$.
$\Delta_{D, i}$ and $\Delta_{i}$ are always positive, and
 $P_n$ reduces to:
$$
\begin{array}{llc}
\displaystyle\max_{p_{i},x} & p_{n}\Delta_{D,n} + U^u_D (t_n) - ax \ ,&\\
\mbox{subject to}& \forall i \neq n. \ p_i (-x - \Delta_i) + p_n (x + \Delta_n) + \delta_{i,n}  \leq 0 \ , & \\
& \forall i. \ 0 \leq p_i \leq 1 \ ,&\\
& \sum_i p_i = 1 \ ,&\\
&  0 \leq x \leq 1 \ .&
\end{array}
$$

The main observation that allows for polynomial time 
approximation is that, at the optimal solution point, the quadratic constraints can be partitioned into 
a) those that are tight, and b) those in which the probability variables $p_i$ are 
zero (Lemma~\ref{equalityorzero}). Each quadratic constraint corresponding to $p_i$ can be characterized by the 
curve $p_n (x + \Delta_n) + \delta_{i,n} = 0$. The quadratic constraints are
thus parallel hyperbolic curves on the $(p_n,x)$ plane; see Figure~\ref{fig} for an illustration. 
The optimal values $p_n^o,x^o$
partition the constraints (equivalently, the curves): the 
constraints lying below the optimal value are tight, and in the constraints above the
optimal value the probability variable $p_i$ is zero (Lemma~\ref{zeroprob}).
 The partitioning allows a linear number of 
iterations in the search for the solution,
with each iteration assuming that the optimal 
solution lies between adjacent curves and then
 solving the sub-problem with equality quadratic constraints.

Next, we reduce the problem with equality quadratic constraints to a
problem with two variables, exploiting the nature of the constraints
themselves, along with the fact that the objective has only two
variables. The two-variable problem can be further reduced to a
single-variable objective using an equality constraint, and 
elementary calculus then reduces the problem to finding the roots
of a polynomial. Finally, we use known results to find 
approximate values of irrational roots.
 
\subsection{Algorithm and Main Result}\label{algorithm}
The main result of our paper is the following theorem:
\begin{theorem} \label{mainthm}
Problem $P_n$ can be approximated to an additive $\epsilon$ in time 
\newbound{$O(n^{7} K (\log  n) + 
n^5 \log(\frac{1}{\epsilon}))$}{$O(n^{5} K  + 
n^4 \log(\frac{1}{\epsilon}))$}
using the splitting circle method~\cite{schonhage1982fundamental} for approximating roots.
\end{theorem}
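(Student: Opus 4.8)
\noindent\emph{Proof plan.}
The plan is to convert each of the $O(n)$ subproblems produced by Lemmas~\ref{equalityorzero} and~\ref{zeroprob} into a one–dimensional optimization of an explicit rational function, locate its maximizer among the roots of a single univariate polynomial of degree $O(n)$, and approximate those (possibly irrational) roots with the splitting circle method of Schönhage~\cite{schonhage1982fundamental} to the precision needed for an additive $\epsilon$ guarantee.

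\textbf{Setting up the partition.} By Lemma~\ref{equalityorzero}, at the optimum each quadratic constraint is either tight or has $p_i=0$, and by Lemma~\ref{zeroprob} which case occurs is determined by the position of the optimal point relative to the nested curves $p_n(x+\Delta_n)+\delta_{i,n}=0$: the constraints lying below it are tight and those above it have $p_i=0$. Sorting the $n-1$ values $\delta_{i,n}$ orders these curves, so there are only $O(n)$ candidate partitions $(S,Z)$ of $\{1,\dots,n-1\}$ into tight constraints $S$ and zero-probability constraints $Z$; I would solve the equality-constrained subproblem for each and return the best feasible point found. This is sound: for the partition matching the true optimum, the subproblem's feasible region contains that optimum and attains its value, while every point returned from any subproblem is feasible for $P_n$, because tightness of the constraints in $S$, $p_i=0$ for $i\in Z$, the retained inequalities $p_n(x+\Delta_n)+\delta_{i,n}\le 0$ for $i\in Z$, and the box and simplex constraints are all kept.

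\textbf{Reducing a subproblem to one variable.} Fix $(S,Z)$. For $i\in S$ the tight constraint gives $p_i=(p_n(x+\Delta_n)+\delta_{i,n})/(x+\Delta_i)$, well defined since $\Delta_i>0$ and $x\ge 0$; for $i\in Z$, $p_i=0$. Substituting into $\sum_i p_i=1$ gives one equation that is linear in $p_n$, so $p_n=N(x)/M(x)$ with $N,M$ polynomials of degree $O(n)$ built cheaply from $P(x)=\prod_{i\in S}(x+\Delta_i)$ and its derivative; since all $\Delta_i>0$, $M(x)>0$ on $[0,1]$, so there are no poles. Substituting into the objective yields a univariate rational function $f(x)=g(x)/M(x)$ with $g=\Delta_{D,n}N+(U^u_D(t_n)-ax)M$, again of degree $O(n)$. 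The feasible $x$ are those in $[0,1]$ satisfying $0\le p_i(x)\le 1$ for $i\in S$, $0\le p_n(x)\le 1$, and $p_n(x)(x+\Delta_n)+\delta_{i,n}\le 0$ for $i\in Z$; each is a sign condition on a degree-$O(n)$ polynomial, so the feasible set is a union of $O(n)$ intervals. Maximizing $f$ over it reduces, by elementary calculus, to comparing $f$ at the interval endpoints (roots of the polynomials just listed) and at the interior critical points, which, since $M>0$, are exactly the roots of the degree-$O(n)$ polynomial $g'(x)M(x)-g(x)M'(x)$.

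\textbf{Numerics and the main obstacle.} Every polynomial above has degree $O(n)$ and coefficients that are sums of products of $O(n)$ of the $K$-bit inputs, hence of bit-length $O(nK)$; I would form them and call the splitting circle method once per polynomial to approximate all real roots to absolute precision $2^{-b}$, then evaluate $f$ — reconstructing $p_n(x)$ and the $p_i(x)$, which by construction sum to $1$ — at every candidate and take the best over all $O(n)$ subproblems. The part where the work concentrates is the error analysis: $f$ and the $p_i(x)$ are ratios whose denominator $M(x)$ is bounded below only by $2^{-O(nK)}$, so $f$ is Lipschitz on the feasible region with constant $2^{O(nK)}$; choosing $b=O(nK+\log\frac{1}{\epsilon})$ bits of root precision (and a comparable output precision for the $p_i$, after which any residual violation of $\sum_i p_i=1$ or of a box constraint is absorbed by an $O(2^{-b})$ correction) then yields an additive $\epsilon$ on the objective while keeping $b$ polynomial in $K$ and $\log\frac{1}{\epsilon}$. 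If the optimal solution is rational, clearing denominators makes the relevant root rational with denominator at most $2^{O(nK)}$, and rounding the Schönhage approximation recovers it exactly, giving an exact polynomial-time solution. Tallying the cost — $O(n)$ subproblems, each requiring the formation of $O(1)$ polynomials of degree $O(n)$ with $O(nK)$-bit coefficients and one invocation of the splitting circle method at precision $O(nK+\log\frac{1}{\epsilon})$ — gives the stated running time; the dummy-target edge case is handled separately in the remark at the end of this section.
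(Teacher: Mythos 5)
Your proposal follows essentially the same route as the paper: partition the constraints via Lemmas~\ref{equalityorzero} and~\ref{zeroprob}, reduce each of the $O(n)$ equality-constrained subproblems to a univariate rational function, locate the maximizer among roots of degree-$O(n)$ polynomials with $O(nK)$-bit coefficients using the splitting circle method at precision $O(nK+\log\frac{1}{\epsilon})$, and your cost tally matches the stated bound. The one loose point is your final count of $O(1)$ polynomials per subproblem: having listed $O(n)$ sign conditions ($0\le p_i(x)\le 1$ for $i\in S$, etc.) as defining the feasible intervals, you owe the observation---established in the paper's equivalence of $Q_{n,i}$ and $R_{n,i}$---that these box constraints are implied by the single constraint $p_n(x+\Delta_n)+\delta_{(i),n}\ge 0$ together with the simplex equation, so that only a constant number of boundary polynomials per subproblem actually require root-finding; without it your own setup suggests $O(n)$ splitting-circle calls per subproblem and an extra factor of $n$ in the running time.
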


\begin{remark}
The technique of Lenstra et al.~\shortcite{LenstraLenstraLovasz1982} can be used to exactly compute rational roots. Employing it in 
conjunction with the splitting circle method yields a time bound \newbound{$O(
\max \{n^{16} K^3 (\log  n)^3, ~n^4 \log({1}/{\epsilon})\})$}{$O(\max \{n^{13} K^3, ~n^{5} K  + 
n^4 \log({1}/{\epsilon})\})$}. Also, 
this technique finds an exact optimal solution if the solution is rational.
\end{remark}


Before presenting our algorithm we state two results about the
optimization problem $P_n$ that motivate the algorithm and are also used
in the correctness analysis. The proof of the first lemma is omitted due
to lack of space.

\begin{lemma} \label{equalityorzero}
Let $p_n^o, x^o$ be the optimal solution. Assume $x^o > 0$ and 
$p_n^o < 1$. Then, at
 $p_n^o, x^o$, for all $i \neq n$, either $p_i=0$ or
$p_n^o (x^o + \Delta_n) + \delta_{i,n} = p_i (x^o + \Delta_i)$, i.e., the $i^{th}$ quadratic constraint is tight.
\end{lemma}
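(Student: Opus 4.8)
The plan is to argue by contradiction: assuming the conclusion fails, I exhibit a nearby feasible point with strictly larger objective, contradicting optimality of $(p_n^o,x^o)$.

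First I would rewrite each quadratic constraint in its equivalent lower-bound form: since $\Delta_i>0$ and $x\ge 0$ imply $x+\Delta_i>0$, the $i$-th constraint $p_i(-x-\Delta_i)+p_n(x+\Delta_n)+\delta_{i,n}\le 0$ is equivalent to
$$p_i \;\ge\; L_i(p_n,x)\;:=\;\frac{p_n(x+\Delta_n)+\delta_{i,n}}{x+\Delta_i}.$$
The structural point is that the objective $p_n\Delta_{D,n}+U^u_D(t_n)-ax$ depends only on the pair $(p_n,x)$ and not on how the remaining mass $1-p_n$ is distributed among $p_1,\dots,p_{n-1}$; a pair $(p_n,x)$ is completable to a feasible point exactly when $L_i(p_n,x)\le 1$ for all $i\neq n$ and $\sum_{i\neq n}\max\{0,L_i(p_n,x)\}\le 1-p_n$.

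Now suppose the conclusion fails, so there is an index $i\neq n$ with $p_i^o>0$ and the $i$-th constraint slack, i.e.\ $p_i^o>L_i(p_n^o,x^o)$; together with $p_i^o>0$ this gives $p_i^o>\max\{0,L_i(p_n^o,x^o)\}$. Every feasible coordinate satisfies $p_j^o\ge\max\{0,L_j(p_n^o,x^o)\}$, so summing over $j\neq n$ and using $\sum_{j\neq n}p_j^o=1-p_n^o$ yields a strictly positive slack
$$m\;:=\;1-p_n^o-\sum_{j\neq n}\max\{0,L_j(p_n^o,x^o)\}\;>\;0.$$
In particular $p_n^o<1$ and every $L_j(p_n^o,x^o)<1$. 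I would then perturb by increasing $p_n$ to $p_n^o+\varepsilon$ (keeping $x=x^o$) for small $\varepsilon>0$. Each $L_j$ increases by exactly $\varepsilon(x^o+\Delta_n)/(x^o+\Delta_j)$, so $\sum_{j\neq n}\max\{0,L_j\}$ grows by at most $\varepsilon S$ with $S:=\sum_{j\neq n}(x^o+\Delta_n)/(x^o+\Delta_j)$ finite, while the available budget $1-p_n$ drops by exactly $\varepsilon$; hence the new slack is at least $m-\varepsilon(1+S)>0$ for $\varepsilon$ small, and all $L_j$ remain below $1$. Therefore the perturbed pair is completable to a feasible point, and its objective exceeds the optimum by $\varepsilon\Delta_{D,n}>0$ (using $\Delta_{D,n}=U^a_D(t_n)-U^u_D(t_n)>0$) — a contradiction.

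The step I expect to be the main obstacle is the feasibility bookkeeping in the perturbation: because the $i$-th constraint is a moving lower bound on $p_i$ that shifts when $p_n$ changes, one must check that the total forced mass $\sum_{j\neq n}\max\{0,L_j\}$ cannot grow faster than the budget that is freed up — which is exactly what the lower-bound reformulation of the first step makes transparent. The hypotheses $x^o>0$ and $p_n^o<1$ serve only to keep the argument away from the boundary (so that $p_n$ can be increased and the bound $0\le x$ stays inactive); the degenerate cases they exclude either satisfy the conclusion vacuously (when $p_n^o=1$, all other $p_i$ are $0$) or can be treated separately in the algorithm.
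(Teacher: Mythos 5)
Your proof is correct, and while it shares the paper's core idea---contradict optimality by showing $p_n$ can be increased---the execution is genuinely different and cleaner. The paper works directly with all $n$ probability variables and splits into two cases: when no constraint is ``degenerate'' (i.e., $p_n^o(x^o+\Delta_n)+\delta_{l,n}\neq 0$ for all $l$) it redistributes mass coordinate-by-coordinate with a family of carefully coupled $\epsilon_j,\epsilon_k,\epsilon_k'$ increments, and when some constraint is degenerate it switches strategy entirely and perturbs $x$ \emph{downward} to force $p_n$ up. Your reformulation $p_i\ge L_i(p_n,x)$ projects out the distribution variables: completability of a pair $(p_n,x)$ reduces to the single scalar condition $\sum_{j\neq n}\max\{0,L_j\}\le 1-p_n$, the slack $m>0$ follows from summing the strict inequality at the offending index, and the Lipschitz bound on how the forced mass grows handles both of the paper's cases uniformly (in particular, a coordinate with $L_l=0$ that must become positive after the perturbation is absorbed into the $\varepsilon S$ term rather than requiring a separate argument). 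What your approach buys is the elimination of the case analysis and of the need to ever move $x$; as a side effect it shows the hypothesis $x^o>0$ is not actually needed for this lemma, only $x^o\ge 0$ (it is needed elsewhere, e.g., in the paper's case~(2) manoeuvre and in Lemma~\ref{zeroprob}'s use downstream). What the paper's more hands-on perturbation buys is that it generalizes more directly to the dummy-target variant (Lemma~\ref{dummy}), where the redistribution and the decrease of $x$ reappear. One small point to make explicit if you write this up: after the perturbation you must still exhibit the completion with each $p_j\le 1$, which follows since the $p_j$ are nonnegative and sum to $1-p_n^o-\varepsilon<1$.
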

\iffullversion
\begin{proof} We prove the contrapositive. Assume there exists a $i$ such that 
$p_i \neq 0$ and the $i^{th}$ quadratic constraint 
is not tight. Thus, there exists an $\epsilon > 0$ such that
$$
p_i (-x^o - \Delta_i) + p_n^o (x^o + \Delta_n) + \delta_{i,n} + \epsilon = 0 \ .
$$
We show that it is possible to increase to $p_n^o$ by a small amount 
such that all constraints are satisfied, which leads to a higher 
objective value, proving that $p_n^o, x^o$ is not optimal. 
Remember that all $\Delta$'s are $\geq 0$, and $x > 0$. 

We do two cases: (1) 
assume  
$\forall l \neq n.~ p_n^o (x^o + \Delta_n) + \delta_{l,n} \neq 0$.
 Then, first, note that $p_n^o$ can be increased by $\epsilon_i$ or less and and $p_i$ can be 
decreased by $\epsilon_i'$ to still satisfy the constraint, as long as 
$$\epsilon_i' (x^o + \Delta_i) + \epsilon_i (x^o + \Delta_n) \leq \epsilon \ .$$ 
It is always possible to choose such $\epsilon_i > 0, \epsilon_i' > 0$.
Second, note that for those $j$'s for which $p_j = 0$ we get $p_n^o (x^o + \Delta_n) 
+ \delta_{j,n} \leq 0$, and by assumption $p_n^o (x^o + \Delta_n) + 
\delta_{j,n} \neq 0$, thus, $p_n^o (x^o + \Delta_n) + \delta_{j,n} < 0$. 
Let $\epsilon_j$ be such  that
$(p_*^o +  \epsilon_j)(x^o + \Delta_*) + \delta_{j,*} = 0$, i.e., $p_n^o$ 
can be increased by $\epsilon_j$ or less and the $j^{th}$ constraint will still be satisfied. Third, for 
those $k$'s for which $p_k \neq 0$, $p_n^o$ can be increased by $\epsilon_k$ or less, which 
must be accompanied with $\epsilon_k' = \frac{x^o + \Delta_*}{x^o + \Delta_i} 
\epsilon_k$ increase in $p_k$ in order to satisfy the $k^{th}$ quadratic constraint. 

Choose feasible $\epsilon_k'$'s (which fixes the choice of $\epsilon_k$ also) such that $\epsilon_i' - \sum_k \epsilon_k' > 0$. Then
choose an increase in $p_i$: $\epsilon_i'' < \epsilon_i'$ such that $$\epsilon_n = 
\epsilon_i'' - \sum_k \epsilon_k' > 0\mbox{ and  }\epsilon_n <
\min \{
\epsilon_i, \min_{p_j=0} \epsilon_j, \min_{p_k \neq 0} \epsilon_k \} 
$$ Increase
$p_n^o$ by $\epsilon_n$, $p_k$'s by $\epsilon_k'$ and decrease $p_i$ by $\epsilon_i''$ so that the constraint 
$\sum_i p_i = 1$ is still satisfied. Also, observe that choosing an increase 
in $p_n^o$ that is less than any $\epsilon_k$, any $\epsilon_j$, 
$\epsilon_i$ satisfies the quadratic constraints corresponding 
to $p_k$'s, $p_j$'s and $p_i$ respectively. 
Then, as $\epsilon_n > 0$ we have shown that $p_n^o$ cannot be optimal.

Next, for the case (2) if $p_n^o (x^o + \Delta_n) + \delta_{l,n} = 0$ for some $l$ then $p_l = 0$, 
, $\delta_{l,n} < 0$ and the objective becomes $$p_n \Delta_n -  
\frac{\delta_{l,n}}{p_n} - \Delta_n \ .$$ Thus, increasing $p_n$ increases the objective. Note
that choosing a lower than $x^o$ feasible value for $x$, results in an higher than 
$p_n^o$ value for $p_n$.
Also, the $k^{th}$ constraint can be written as 
$p_k(-x-\Delta_k) + \delta_{k,n} - \delta_{l,n} \leq 0$. We show that 
it is possible to choose a feasible $x$ lower than $x^o$. If for some $j$, $p_j = 0$, 
then $x$ can be decreased without violating the corresponding constraint. Let $p_t$'s
be the probabilities that are non-zero and let the number of such $p_t$'s be $T$.
By assumption there is an $i \neq l$ such that $p_i > 0$ and
$$
p_i (-x^o - \Delta_i) + \delta_{i,n} - \delta_{l,n} + \epsilon =  0 \ .
$$
For $i$, it is possible to decrease $p_i$ by $\epsilon_{i}$
such that $\epsilon_{i}(x^o + \Delta_i) \leq \epsilon/2$, hence 
the constraint remains satisfied and is still non-tight.

Increase each $p_t$ by $\epsilon_{i}/T$ so that the constraint $\sum_i p_i = 1$ is
satisfied. Increasing $p_t$ makes the $t^{th}$ constraint becomes 
non-tight for sure. Then, all constraints with probabilities greater than $0$ are 
non-tight. For each such constraint it is possible to decrease $x$ (note $x^o > 0$)  without violating
the constraint.Thus, we obtain a lower feasible $x$ than $x^o$, hence a higher $p_n$ 
than $p_n^o$. Thus, $p_n^o, x^o$ is not optimal.
\end{proof}
\fi
\begin{lemma} \label{zeroprob}
Assume $x^o > 0$ and 
$p_n^o < 1$. Let $p_n^o (x^o + \Delta_n) + \delta  = 0$. If for some $i$,  
$\delta_{i,n} < \delta$ then $p_i = 0$. 
If for some $i$,  $\delta_{i,n} > \delta$ then 
$p_n^o (x^o + \Delta_n) + \delta_{i,n} = p_i (x^o + \Delta_i)$.
If for some $i$, $\delta_{i,n} = \delta$ then 
$p_i = 0$ and $p_n^o (x^o + \Delta_n) + \delta_{i,n} = p_i (x^o + \Delta_i)$.
\end{lemma}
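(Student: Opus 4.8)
The plan is to derive all three claims from Lemma~\ref{equalityorzero}, which already tells us that at the optimum every constraint $i \neq n$ is either slack with $p_i = 0$ or tight. The whole argument then reduces to sign-checking once we substitute the defining relation for $\delta$ into the $i$-th constraint.

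First I would rewrite the $i$-th quadratic constraint $p_i(-x^o - \Delta_i) + p_n^o(x^o + \Delta_n) + \delta_{i,n} \leq 0$ using $p_n^o(x^o + \Delta_n) = -\delta$, which is exactly the definition of $\delta$. This turns the constraint into $p_i(x^o + \Delta_i) \geq \delta_{i,n} - \delta$. Since $x^o > 0$ and $\Delta_i > 0$, the coefficient $x^o + \Delta_i$ is strictly positive, so the left-hand side is nonnegative and vanishes precisely when $p_i = 0$. I would also record that the tightness condition of Lemma~\ref{equalityorzero}, namely $p_n^o(x^o + \Delta_n) + \delta_{i,n} = p_i(x^o + \Delta_i)$, reads in this notation $p_i(x^o + \Delta_i) = \delta_{i,n} - \delta$.

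Next I would split on the sign of $\delta_{i,n} - \delta$. If $\delta_{i,n} < \delta$, the quantity $\delta_{i,n} - \delta$ is negative, so the tightness equation $p_i(x^o + \Delta_i) = \delta_{i,n} - \delta$ is impossible (its left side is $\geq 0$); by the dichotomy of Lemma~\ref{equalityorzero} we must have $p_i = 0$. If $\delta_{i,n} > \delta$, then feasibility forces $p_i(x^o + \Delta_i) \geq \delta_{i,n} - \delta > 0$, hence $p_i \neq 0$, and again by Lemma~\ref{equalityorzero} the only remaining possibility is that constraint $i$ is tight, i.e., $p_n^o(x^o + \Delta_n) + \delta_{i,n} = p_i(x^o + \Delta_i)$. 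Finally, if $\delta_{i,n} = \delta$, the tightness equation reads $p_i(x^o + \Delta_i) = 0$, which, as $x^o + \Delta_i > 0$, is equivalent to $p_i = 0$; conversely $p_i = 0$ makes the tightness equation hold, so in this boundary case both stated conclusions are valid simultaneously.

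I do not expect a genuine obstacle: the statement is essentially a corollary of Lemma~\ref{equalityorzero} together with the elementary fact $x^o + \Delta_i > 0$. The only points requiring care are (i) that the hypotheses $x^o > 0$ and $p_n^o < 1$ are exactly what is needed to invoke Lemma~\ref{equalityorzero}, and (ii) the $\delta_{i,n} = \delta$ boundary case, where one must observe that the ``$p_i = 0$'' and ``constraint tight'' alternatives coincide rather than conflict.
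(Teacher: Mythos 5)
Your proposal is correct and follows essentially the same route as the paper: invoke the dichotomy of Lemma~\ref{equalityorzero} and then do a sign check on $p_n^o(x^o+\Delta_n)+\delta_{i,n}$, which equals $\delta_{i,n}-\delta$ by the definition of $\delta$, using $x^o+\Delta_i>0$. The only difference is cosmetic---you substitute $-\delta$ for $p_n^o(x^o+\Delta_n)$ before case-splitting, whereas the paper cases directly on the sign of $p_n^o(x^o+\Delta_n)+\delta_{i,n}$.
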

\begin{proof}
The quadratic constraint for $p_i$ is 
$
p_n^o (x^o + \Delta_n) + \delta_{i,n} \leq p_i (x^o + \Delta_i) 
$.
By Lemma~\ref{equalityorzero}, either $p_i = 0$ or the constraint is tight.
If $p_n^o (x^o + \Delta_n) 
+ \delta_{i,n} < 0$, then, since $p_i \geq 0$ and $x^o + \Delta_i \geq 0$, 
the constraint cannot be tight. Hence, $p_i = 0$. If $p_n^o (x^o + \Delta_n) 
+ \delta_{i,n} > 0$, then, $p_i \neq 0$ or else with $p_i = 0$
the constraint is not satisfied. Hence the constraint is tight. The last case with 
$p_n^o (x^o + \Delta_n) + \delta_{i,n} = 0$ is trivial.
\end{proof}


\begin{figure}[t]
\centering

\begin{tikzpicture}[scale=0.7]

\tikzstyle{blackdot}=[circle,draw=black,fill=black,thin,inner sep=0pt,minimum size=1.5mm]



\draw [-latex] (0,-2.8) -- (0,4);
\draw [-latex] (-1.8,0) -- (5,0);

\draw [lightgray] (3,0) -- (3,3);
\draw [lightgray] (0,3) -- (3,3);
\draw [lightgray] (-1.5,-2.8) -- (-1.5,4);

\node at (-1.5,-0.2) {\scriptsize{$-\Delta_n$}};

\draw [dashed] (-0.8, 2.8) .. controls (0,1) and (1,0.5) .. (3.5,0.5);
\draw [dashed] (-0.8, 4) .. controls (0,1.5) and (1,1) .. (3.5,1);
\draw [dotted] (-0.3, 4) .. controls (0.3,2) and (1.3,1.5) .. (3.5,1.5);

\draw [dashed] (-0.8, -2.8) .. controls (0,-1) and (1,-0.5) .. (3.5,-0.5);

\node at (5,-0.2) {\scriptsize{$x$}};
\node at (-0.2,4) {\scriptsize{$p_n$}};
\node at (-0.2,3) {\scriptsize{$1$}};
\node at (3,-0.2) {\scriptsize{$1$}};

\node at (4.1,0.5) {\scriptsize{$\delta=-1$}};
\node at (4.1,1) {\scriptsize{$\delta=-2$}};
\node at (4.1,1.5) {\scriptsize{$\delta=-3$}};
\node at (4,-0.5) {\scriptsize{$\delta=1$}};

\node at (1,1.8) [blackdot] {};
\node at (1.2,2.7) {\scriptsize{$(p^o_n,x^o)$}};
\draw [-latex] (1.2,2.5) -- (1.03,1.9);

\end{tikzpicture}

\caption{The quadratic constraints are partitioned into those below $(p^o_n,x^o)$ that are tight (dashed curves), and those above $(p^o_n,x^o)$ where $p_i=0$ (dotted curves).} 
\label{fig} 
\end{figure}
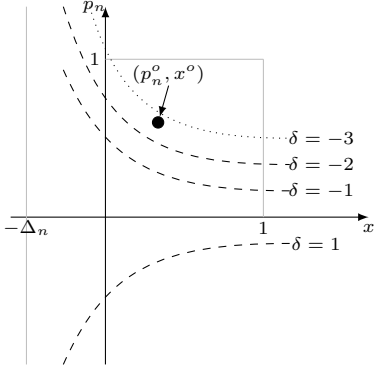

From Lemma~\ref{zeroprob}, if $p_n^o, x^o$ lies in the region between the adjacent
hyperbolas given by 
$p_n^o (x^o + \Delta_n) + \delta_{i,n} = 0$ and 
$p_n^o (x^o + \Delta_n) + \delta_{j,n} = 0$ (and $0 < x^o \leq 1$ and $0 \leq p_n^o < 1$), 
then $\delta_{i,n} \leq 0$ and $
p_i \geq 0$ and for the $k^{th}$ quadratic constraint with $\delta_{k,n} < \delta_{i,n}$, $p_k = 0$ and
for the $j^{th}$ quadratic constraint  with $\delta_{j,n} > \delta_{i,n}$, $p_j \neq 0$ and  
the constraint is tight. 

These insights lead to Algorithm~\ref{alg1}. After handling the case of $x=0$ and $p_n =1$ separately,
the algorithm sorts the $\delta$'s to get 
$\delta_{(1),n}, \ldots, \delta_{(n-1),n}$ in ascending order. Then, it iterates over 
the sorted $\delta$'s until a non-negative $\delta$ is reached, assuming the corresponding 
$p_i$'s to be zero and the other quadratic constraints to be equalities, and using the subroutine 
$\mathsf{EQ\_OPT}$ to solve the induced sub-problem. For ease of exposition we
assume $\delta$'s to be distinct, but the extension to repeated $\delta$'s is 
quite natural and does not require any new results. The sub-problem for the
$i^{th}$ iteration is given by the problem $Q_{n,i}$: 
$$
\begin{array}{llc}
\displaystyle\max_{x, p_{(1)}, \ldots, p_{(i)}, p_n} & p_{n}\Delta_{D,n} - ax \ ,&\\
\mbox{subject to}& p_n (x + \Delta_n) + \delta_{(i),n} \geq 0 \ ,& \\
& \mbox{if $i \geq 2$ then } p_n (x + \Delta_n) + \delta_{(i-1),n} < 0 \ ,&\\
&  \forall j \geq i. ~ p_n (x + \Delta_n) + \delta_{(j),n} = p_{(j)} (x + \Delta_j)  \ , &\\
& \forall j > i. ~ 0 < p_{(j)} \leq 1 \ ,&\\
& 0 \leq p_{(i)} \leq 1 \ ,&\\
& \sum_{k=i}^{n-1} p_{(k)} = 1 - p_n \ ,&\\
& 0 \leq p_n < 1 \ ,&\\
&  0 < x \leq 1  \ .&
\end{array}
$$
The best (maximum) solution from all the sub-problems (including $x=0$
and $p_n=1$) is chosen as the final answer. 

\begin{lemma}
\label{lem:om1}
Assuming $\mathsf{EQ\_OPT}$ produces an $\epsilon$-additive approximate objective value,
Algorithm~\ref{alg1} finds an $\epsilon$-additive approximate objective of optimization problem 
$P_n$.
\end{lemma}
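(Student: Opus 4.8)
The plan is to prove two comparisons relating $P_n$ to the subproblems $Q_{n,i}$ (and the two special cases) solved by Algorithm~\ref{alg1}: every feasible point of a subproblem yields a feasible point of $P_n$ with the same objective, and the optimum of $P_n$ is realized by exactly one subproblem. Passing the $\epsilon$-additive guarantee of $\mathsf{EQ\_OPT}$ through the finite maximum over subproblems then gives the claim. Throughout, note that the objective of each $Q_{n,i}$ is that of $P_n$ minus the fixed constant $U^u_D (t_n)$, so I compare objective values modulo this constant and re-add it at the end.

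\emph{Every subproblem under-approximates $P_n$.} Take any point feasible for some $Q_{n,i}$ and extend it by $p_{(j)}=0$ for $j<i$. For $j<i$ the corresponding $P_n$-constraint is $p_n(x+\Delta_n)+\delta_{(j),n}\le 0$, which follows from the constraint $p_n(x+\Delta_n)+\delta_{(i-1),n}<0$ of $Q_{n,i}$ together with $\delta_{(j),n}\le\delta_{(i-1),n}$ (ascending order); for $j\ge i$ the $Q_{n,i}$-constraint is the tightness equality, which implies the $P_n$-inequality; the box and simplex constraints coincide; and $Q_{n,i}$'s ranges $0<x\le 1$, $0\le p_n<1$ lie inside those of $P_n$. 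The special cases $x=0$ and $p_n=1$ are plainly feasible for $P_n$ as well. Hence $\mathrm{OPT}(Q_{n,i})\le\mathrm{OPT}(P_n)$ for every $i$ (modulo the constant), and likewise for the special cases.

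\emph{Some subproblem attains $P_n$.} Let $(p_n^o,x^o)$ be optimal for $P_n$. If $x^o=0$ or $p_n^o=1$ it is covered by a special case, which reduces to a linear program (or is trivial) and is solved exactly. Otherwise $x^o>0$, $p_n^o<1$, and I set $\delta^o:=-p_n^o(x^o+\Delta_n)\le 0$. Lemmas~\ref{equalityorzero} and~\ref{zeroprob} pin down the optimum: sort the $\delta_{(\cdot),n}$ ascending and let $i^*$ be the least index with $\delta_{(i^*),n}\ge\delta^o$ (it exists since $p_n^o<1$ forces some $p_{(j)}^o>0$, hence some $\delta_{(j),n}>\delta^o$); then $p_{(j)}^o=0$ for $j<i^*$, the constraint is tight for all $j\ge i^*$, and $p_{(j)}^o>0$ for $j>i^*$. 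I would then check that $(p_n^o,x^o)$ with these probabilities is feasible for $Q_{n,i^*}$: the sign constraints $p_n(x+\Delta_n)+\delta_{(i^*),n}\ge 0$ and (when $i^*\ge 2$) $p_n(x+\Delta_n)+\delta_{(i^*-1),n}<0$ are exactly $\delta_{(i^*),n}\ge\delta^o$ and $\delta_{(i^*-1),n}<\delta^o$ (the latter by minimality of $i^*$); the tightness equalities and the strict bounds $0<p_{(j)}\le 1$ for $j>i^*$ hold by the preceding sentence; and $0<x^o\le 1$, $0\le p_n^o<1$ hold by assumption. Moreover $Q_{n,i^*}$ is among the subproblems solved, since the iteration runs through the first nonnegative $\delta$ and $\delta^o\le 0$ forces $\delta_{(i^*-1),n}<0$. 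Thus $\mathrm{OPT}(Q_{n,i^*})$ is at least the objective at $(p_n^o,x^o)$, i.e.\ $\mathrm{OPT}(P_n)$ (modulo the constant).

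Combining the two parts, if $v_i$ is the $\epsilon$-approximate value $\mathsf{EQ\_OPT}$ returns on $Q_{n,i}$, then $\max_i v_i\ge v_{i^*}\ge\mathrm{OPT}(Q_{n,i^*})-\epsilon=\mathrm{OPT}(P_n)-\epsilon$, while $\max_i v_i\le\max_i\mathrm{OPT}(Q_{n,i})+\epsilon\le\mathrm{OPT}(P_n)+\epsilon$ (all modulo the constant $U^u_D (t_n)$, which the algorithm re-adds); including the exactly-solved special cases changes nothing, so the output is within $\epsilon$ of $\mathrm{OPT}(P_n)$. The main obstacle is the second part: correctly identifying $i^*$ from the optimal solution and verifying that the optimum meets the particular mix of strict and non-strict constraints of $Q_{n,i^*}$ — this is precisely where Lemmas~\ref{equalityorzero} and~\ref{zeroprob} (the partition picture of Figure~\ref{fig}) are essential, and the only fussy points, neither needing new ideas, are the boundary-index choice when $\delta^o$ coincides with some $\delta_{(m),n}$ and the routine handling of repeated $\delta$'s.
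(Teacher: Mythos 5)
Your proposal is correct and follows essentially the same route as the paper's own (terse) proof: locate the unique subproblem $Q_{n,i^*}$ whose strip between adjacent hyperbolas contains the $P_n$-optimum (via Lemmas~\ref{equalityorzero} and~\ref{zeroprob}), observe that every other subproblem is a restriction of $P_n$ with the same objective, and pass the $\epsilon$-additive guarantee through the finite maximum. You simply spell out the two feasibility verifications (subproblem points embed into $P_n$, and the $P_n$-optimum satisfies $Q_{n,i^*}$'s mix of strict and non-strict constraints) that the paper asserts without detail.
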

The proof is omitted due to space constraints. 
\iffullversion
\begin{proof}
If $p_n^o (x^o + \Delta_n) + \delta_{i,n} \geq 0$ and 
$p_n^o (x^o + \Delta_n) + \delta_{j,n} < 0$, where $\delta_{j,n} < \delta_{i,n}$ and 
$\nexists k.~ \delta_{j,n} < \delta_{k,n} < \delta_{i,n}$, then the exact solution of 
the $i^{th}$ subproblem will be $p_n^o, x^o$. Now, since $0 < x \leq 1$ and 
$0 \leq p_n < 1$, there is one $i$ for which $p_n^o (x^o + \Delta_n) + \delta_{i,n} \geq 0$ and 
$p_n^o (x^o + \Delta_n) + \delta_{j,n} < 0$, and thus the solution of this sub-problem
will return the maximum value. The solution of other sub-problems will return a lower
value as the objective is same in all sub-problems. Hence, maximum
of the maximum in each iteration is the global maximum. The approximation case
is then an easy extension.
\end{proof}
\fi

\begin{algorithm}[h] \label{alg1}
\DontPrintSemicolon
$l \leftarrow prec(\epsilon, n, K)$, where $prec$ is defined after Lemma~\ref{thm1}\; 
Sort $\delta$'s in ascending order to get $\delta_{(1),n}, \ldots, \delta_{(n-1),n}$, with corresponding 
variables $p_{(1)}, \ldots, p_{(n-1)}$ and quadratic constraints $C_{(1)}, \ldots, C_{(n-1)}$\;
Solve the LP problem for the two cases when $x=0$ and $p_n=1$ respectively. Let 
the solution be $S^0, p^0_{(1)}, \ldots, p^0_{(n-1)}, p^0_n, x^0$ and 
$S^{-1}, p^{-1}_{(1)}, \ldots, p^{-1}_{(n-1)}, p^{-1}_n, x^{-1}$ respectively.\;
\For{$i\leftarrow 1$ \KwTo $n-1$}{
\eIf{$\delta_{(i),n} \leq 0 \lor (\delta_{(i),n} > 0 \land \delta_{(i-1),n} < 0)$}{
$p_{(j)} \leftarrow 0$ for $j < i$. \;
Set constraints $C_{(i)}, \ldots, C_{(n-1)}$ to be equalities. \;
$S^i, p^i_{(1)}, \ldots, p^i_{(n-1)}, p^i_n, x^i \leftarrow \mathsf{EQ\_OPT} (i,l)$\;
}{
$S^i \leftarrow -\infty$\;
}
}
$f \leftarrow \arg\max_i \{S^{-1}, S^0, S^1, \ldots, S^i, \ldots, S^{n-1} \}$ \;
$p^f_1, \ldots, p^f_{n-1} \leftarrow \mbox{Unsort } p^f_{(1)}, \ldots, p^f_{(n-1)}$ \;
\Return $p^f_1, \ldots, p^f_n, x^f$
\caption{$\mathsf{APX\_SOLVE}(\epsilon, P_n)$}
\end{algorithm}

$\mathsf{EQ\_OPT}$ solves a two-variable problem $R_{n,i}$ instead of 
$Q_{n,i}$. The problem $R_{n,i}$ is defined as follows:
$$
\begin{array}{llc}
\max_{x, p_{n}} ~~p_{n}\Delta_{D,n} - ax \ ,& &\\
\mbox{subject to}&& \\
 p_n (x + \Delta_n) + \delta_{(i),n} \geq 0 \ ,&&\\
 \mbox{if $i \geq 2$ then } p_n (x + \Delta_n) + \delta_{(i-1),n} < 0 \ ,&&\\
p_n \left ( 1 +  \sum_{j: i \leq j \leq n-1} \frac{x + \Delta_n}{x + \Delta_{(j)}} \right )
= 1 - \sum_{j: i \leq j \leq n-1} \frac{\delta_{{(j)},n}}{x + \Delta_{(j)}} \ ,&&\\
 0 \leq p_n < 1 \ , &&\\
  0 < x \leq 1 \ .&&
\end{array}
$$
The following result justifies solving $R_{n,i}$ instead of $Q_{n,i}$.

\begin{lemma}
$Q_{n,i}$ and $R_{n,i}$ are equivalent for all $i$.
\end{lemma}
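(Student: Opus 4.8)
The plan is to exhibit an explicit bijection between the feasible sets of $Q_{n,i}$ and $R_{n,i}$ that leaves the objective $p_n\Delta_{D,n}-ax$ unchanged; equivalence of the two problems (equal optima, and recovery of a full $Q_{n,i}$-solution from an $R_{n,i}$-solution, which is what $\mathsf{EQ\_OPT}$ needs) then follows immediately. Throughout I would use that $x>0$ and every $\Delta_{(j)}>0$, so the denominators $x+\Delta_{(j)}$ occurring in $R_{n,i}$ are strictly positive and all the algebra below is legitimate.

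The forward map simply forgets the auxiliary variables $p_{(i)},\dots,p_{(n-1)}$. Given a feasible point of $Q_{n,i}$, its equality constraints $p_n(x+\Delta_n)+\delta_{(j),n}=p_{(j)}(x+\Delta_{(j)})$ can be solved to give $p_{(j)}=\bigl(p_n(x+\Delta_n)+\delta_{(j),n}\bigr)/(x+\Delta_{(j)})$ for $i\le j\le n-1$; substituting these into $\sum_{k=i}^{n-1}p_{(k)}=1-p_n$ and collecting the coefficient of $p_n$ produces exactly the single equality constraint of $R_{n,i}$. The remaining constraints of $R_{n,i}$ — the two sign constraints on $p_n(x+\Delta_n)+\delta_{(i),n}$ and (for $i\ge2$) on $p_n(x+\Delta_n)+\delta_{(i-1),n}$, together with $0\le p_n<1$ and $0<x\le1$ — are a literal subset of those of $Q_{n,i}$, and the objective is identical, so the image is feasible for $R_{n,i}$ with the same value.

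The inverse map takes a feasible $(x,p_n)$ of $R_{n,i}$ and defines $p_{(j)}:=\bigl(p_n(x+\Delta_n)+\delta_{(j),n}\bigr)/(x+\Delta_{(j)})$ for $i\le j\le n-1$. These satisfy the equality constraints of $Q_{n,i}$ by construction, and reversing the algebra of the previous paragraph turns the single equality of $R_{n,i}$ back into $\sum_{k=i}^{n-1}p_{(k)}=1-p_n$. Since this is the only place where any real verification is needed, I expect the main (still minor) obstacle to be checking the box constraints $0<p_{(j)}\le1$ of $Q_{n,i}$: because the $\delta$'s are sorted in ascending order and assumed distinct, $\delta_{(j),n}>\delta_{(i),n}$ for $j>i$, so $p_n(x+\Delta_n)+\delta_{(i),n}\ge0$ forces $p_n(x+\Delta_n)+\delta_{(j),n}>0$ and hence $p_{(j)}>0$, while $p_{(i)}\ge0$ follows from the same sign constraint; then each $p_{(k)}\ge0$ together with $\sum_{k=i}^{n-1}p_{(k)}=1-p_n\le1$ gives $p_{(k)}\le1-p_n\le1$. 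The two maps are mutually inverse — in any $Q_{n,i}$-feasible point the $p_{(j)}$ are forced to equal the reconstructed values — so they establish the required equivalence, and the boundary case $i=1$ needs no separate treatment since the ``$i\ge2$'' constraint is then simply vacuous.
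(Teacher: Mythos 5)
Your proposal is correct and follows essentially the same route as the paper's own proof: divide each equality constraint by $x+\Delta_{(j)}$ and sum against $\sum_k p_{(k)}=1-p_n$ to obtain the single equality of $R_{n,i}$, then reconstruct $p_{(j)}=\bigl(p_n(x+\Delta_n)+\delta_{(j),n}\bigr)/(x+\Delta_{(j)})$ in the reverse direction and verify nonnegativity from the sorted, distinct $\delta$'s and the bound $p_{(j)}\le 1$ from the sum. The only difference is presentational (you phrase it as a bijection where the paper phrases it as equality of the projected feasible regions in $(p_n,x)$), which changes nothing of substance.
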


\begin{proof}
Since the objectives of both problems are identical, 
we prove that the feasible regions for the variables in the 
objective $(p_n, x)$ are identical. 
Assume $p_n, x, p_{(i)}, \ldots, p_{(n-1)}$ is feasible in 
$Q_{n,i}$. The first two constraints are the same in $Q_{n,i}$ and $R_{n,i}$. Divide 
each equality quadratic constraint corresponding to 
non-zero $p_{(j)}$ by $x +\Delta_{(j)}$. Add all such constraints to get:
$$
- \sum_{j: 1 \leq j \leq i} p_{(j)} + 
p_n \left ( \sum_{j: 1 \leq j \leq i} \frac{x + \Delta_n}{x + \Delta_{(j)}} \right ) + 
\sum_{j: 1 \leq j \leq i} \frac{\delta_{{(j)},n}}{x + \Delta_{(j)}} = 0
$$ Then, since $\sum_{k: 1 \leq k \leq i} p_{(k)} =
 1 - p_n$ we get 
$$
p_n \left ( 1 +  \sum_{j: i \leq j \leq n-1} \frac{x + \Delta_n}{x + \Delta_{(j)}} \right )
= 1 - \sum_{j: i \leq j \leq n-1} \frac{\delta_{{(j)},n}}{x + \Delta_{(j)}}.
$$
The
last two constraints are the same in $Q_{n,i}$ and $R_{n,i}$.

Next, assume $p_n, x$ is feasible in $R_{n,i}$. Choose $$
p_{(j)} = 
p_n \left ( \frac{x + \Delta_n}{x + \Delta_{(j)}} \right ) + 
 \frac{\delta_{{(j)},n}}{x + \Delta_{(j)}} \ .
$$ Since $p_n (x + \Delta_n) + \delta_{(i),n} \geq 0$,
we have $p_{(i)} \geq 0$, and since $p_n (x + \Delta_n) + \delta_{(j),n} > 0$ for $j > i$
($\delta$'s are distinct)
we have $p_{(j)} > 0$. Also, 
$$ \sum_{j=i}^{n-1} p_{(j)} = p_n \left ( \sum_{j: i \leq j \leq n-1} \frac{x + \Delta_n}{x + \Delta_{(j)}} \right )
+ \sum_{j: i \leq j \leq n-1} \frac{\delta_{{(j)},n}}{x + \Delta_{(j)}},$$ 
which by the 
third constraint of $R_{n,i}$ is $1 - p_n$. This 
implies $p_{(j)} \leq 1$. Thus, $p_n, x, p_{(i)}, \ldots, p_{(n-1)}$ is feasible in 
$Q_{n,i}$.

\end{proof}

\begin{algorithm}[h] \label{alg2}
\DontPrintSemicolon
Define $F_i(x) = \frac{ 1 - \sum_{j: 1 \leq j \leq i-1} \frac{\delta_{j,n}}{x + \Delta_j}}
{1 +  \sum_{j: 1 \leq j \leq i-1} \frac{x + \Delta_n}{x + \Delta_j}}$\;
Define $feas(x) = \begin{cases}
     true & (x, F_i(x)) \mbox{ is feasible for } R_{n,i}\\
     false & \mbox{otherwise}
   \end{cases}$\;
Find polynomials $f,g$ such that $\frac{f(x)}{g(x)} = F_i(x) \Delta_{D,n} - ax$\;
$h(x) \leftarrow g(x) f'(x) - f(x) g'(x)$\;
$\{r_1, \ldots, r_s \} \leftarrow \mathsf{ROOTS}(h(x), l)$\;
$\{r_{s+1}, \ldots, r_{t} \} \leftarrow \mathsf{ROOTS}(F_i(x) + \frac{\delta_{(i),n}}
{x + \Delta_n}, l)$\; 
$\{r_{t+1}, \ldots, r_{u} \} \leftarrow \mathsf{ROOTS}(F_i(x), l)$\; 
$r_{u+1} \leftarrow 1$\;
\For{$k \leftarrow 1$ \KwTo $u + 1$}{
\eIf{$feas(r_k)$}{
$O_{k} \leftarrow \frac{f(r_k)}{g(r_k)}$\;
}{
	\eIf{$feas(r_k - 2^{-l})$}{
	$O_{k} \leftarrow \frac{f(r_k - 2^{-l})}{g(r_k - 2^{-l})}$; $r_k \leftarrow r_k - 2^{-l}$\;
	}{
		\eIf{$feas(r_k + 2^{-l})$}{
		$O_{k} \leftarrow \frac{f(r_k + 2^{-l})}{g(r_k + 2^{-l})}$; $r_k \leftarrow r_k + 2^{-l}$\;
		}{
             $O_{k} \leftarrow -\infty$\;
             }
	}
}
}
$b \leftarrow \arg\max_k \{O_1, \ldots, O_k, \ldots, O_{u+1} \}$ \;

$ p_{(j)} \leftarrow 0$ for $j < i$\;
$p_{(j)} \leftarrow \displaystyle\frac{p_n(r_b + \Delta_n) + \delta_{(j),n}}{r_b + \Delta_{(j)}}$ for $j \in \{i, \ldots, n-1\}$\;
\Return $O_b, p_{(1)}, \ldots, p_{(n-1)}, p_n, r_b$
\caption{$\mathsf{EQ\_OPT} (i, l)$}
\end{algorithm}

The equality constraint in $R_{n,i}$, which forms a curve $K_i$,
allows substituting $p_n$ with a function $F_i(x)$ of the form ${f(x)}/{g(x)}$.
Then, the steps in $\mathsf{EQ\_OPT}$ involve taking the derivative of the objective 
${f(x)}/{g(x)}$ and finding those roots of the derivative that ensure that $x$ and $p_n$
satisfy all the constraints. The points with zero derivative are however local maxima only.
To find the global maxima, other values of $x$ of interest are where the curve 
$K_i$ intersects the \emph{closed} boundary of the region defined by the 
constraints. Only the closed boundaries are of interest, as maxima (rather
suprema) attained on open boundaries are limit points that are not
contained in the constraint region. However, such points are covered in
the other optimization problems, as shown below.

The limit point on the open boundary $p_n (x + \Delta_n) + \delta_{(i-1),n} < 0$
is given by the roots of $F_i(x) + \frac{\delta_{(i-1),n}}{x + \Delta_n}$. This point 
is the same as the point considered on the closed boundary 
$p_n (x + \Delta_n) + \delta_{(i-1),n} \geq 0$ in problem $R_{n,i-1}$ given by roots 
of $F_{i-1}(x) + \frac{\delta_{(i-1),n}}{x + \Delta_n}$, since 
$F_{i-1}(x) = F_i(x) $
when $p_n (x + \Delta_n) + \delta_{(i-1),n} = 0$. Also, the other cases when 
$x=0$ and $p_n=1$ are covered by the LP solved at the beginning of Algorithm~\ref{alg1}.

 The closed boundary in $R_{n,i}$ are obtained from the constraint $p_n (x + \Delta_n) 
+ \delta_{(i),n} \geq 0$, $0 \leq p_n$ and $x \leq 1$. The value $x$ of the intersection of $p_n (x + \Delta_n) 
+ \delta_{(i),n} = 0$ and $K_i$ is given by the roots of $F_i(x) + \frac{\delta_{(i),n}}
{x + \Delta_n} = 0$. The value  $x$ of the intersection of $p_n = 0$ and $K_i$ is 
given by roots of $F_i(x) = 0$. The value $x$ of the intersection of $x = 1$ and $K_i$ is 
simply $x = 1$. Additionally, as checked in $\mathsf{EQ\_OPT}$, 
all these intersection points must lie
with the constraint regions defined in $Q_{n,i}$.

The optimal $x$ is then the value among all the points of interest stated above 
that yields the maximum value for $\frac{f(x)}{g(x)}$. Algorithm~\ref{alg2}
describes $\mathsf{EQ\_OPT}$, which employs a root finding subroutine $\mathsf{ROOTS}$. Algorithm~\ref{alg2} also takes care of approximate results returned 
by the $\mathsf{ROOTS}$. As a result of the $2^{-l}$ approximation in the value of $x$, 
 the computed $x$ and $p_n$
can lie outside the constraint region when the actual $x$ and $p_n$ are very near the 
boundary of the region. Thus, we check for containment in the constraint region
for points $x \pm 2^{-l}$ and accept the point if the check passes. 
\noindent \paragraph{Remark (dummy target):} As discussed in Section~\ref{notation},
we allow for a dummy target with all costs zero. Let this target be $t_0$. For $n$ not 
representing $0$, there is an
extra quadratic constraint given by $ p_0 (-x_0 - \Delta_0) + p_n (x + \Delta_n) + \delta_{0,n}  \leq 0$, 
but, as $x_0$ and $\Delta_0$ are $0$ the constraint is just $p_n (x + \Delta_n) + \delta_{0,n} \leq 0$.
When $n$ represents $0$, then the $i^{th}$ quadratic constraint is $p_i (-x - \Delta_i)  + \delta_{i,0}  \leq 0$, 
and the objective is independent of $p_n$ as $\Delta_{D,n} = 0$.  We first claim that $p_0 = 0$ at 
any optimal solution. The proof is provided in Lemma~\ref{dummy} in Appendix. Thus,
Lemma~\ref{equalityorzero} and~\ref{zeroprob} continue to hold for $i = 1 \mbox{ to } n-1$
with the additional
restriction that $p_n^o (x^o + \Delta_n) + \delta_{0,n} \leq 0$. 

Thus, when $n$ does not represent $0$, Algorithm~\ref{alg1}
runs with the the additional check $\delta_{(i),n} < \delta_{0,n}$ in the if
condition inside the loop. Algorithm~\ref{alg2} stays the same, except the additional
constraint that $p_0 = 0$. The other lemmas and
the final results stay the same. When $n$ represents $0$, then $x$ needs to be the smallest
possible, and the problem can be solved analytically.

\subsection{Analysis}
Before analyzing the algorithm's approximation guarantee we need a few results that we state below.
\begin{lemma} \label{prec}
The maximum bit precision of any coefficient of the polynomials given as input to $\mathsf{ROOTS}$
is \newbound{$3 n^2 K \log (n)$}{$2n(K + 1.5) + \log(n)$}.
\end{lemma}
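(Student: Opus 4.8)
The plan is to trace the three families of polynomials that Algorithm~\ref{alg2} feeds to $\mathsf{ROOTS}$ --- namely $h(x)=g(x)f'(x)-f(x)g'(x)$, the numerator of $F_i(x)+\frac{\delta_{(i),n}}{x+\Delta_n}$, and the numerator of $F_i(x)$ --- back to the raw inputs, clearing all denominators, and to bound both the degree and the coefficient bit length at every intermediate step. The base case is immediate: since every input is given to $K$ bits, each of $\Delta_{D,i}$, $\Delta_i$, $\delta_{i,j}$ is a difference of two such numbers and hence has bit length at most $K+1$, while $a$ has at most $K$ bits.

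The core of the argument is the bookkeeping for $F_i(x)=f(x)/g(x)$. Putting $F_i$ over the common denominator $\prod_j (x+\Delta_j)$ --- a product of at most $n-1$ linear factors --- makes its numerator and denominator polynomials of degree at most $n-1$ whose coefficients are, up to signs and the scalars $\delta_{(j),n}$ and $\Delta_n$, elementary symmetric functions of the $\Delta_j$. Each such coefficient $e_k$ over $m\le n-1$ variables satisfies $|e_k|\le \binom{m}{k}2^{k(K+1)}\le 2^{m(K+2)}$, so the numerator and denominator coefficients of $F_i$ have bit length at most roughly $(n-1)(K+2)+\log n$. From there one walks up the definitions in Algorithm~\ref{alg2}: forming $f(x)/g(x)=F_i(x)\Delta_{D,n}-ax$ multiplies the numerator by $\Delta_{D,n}$ and adds $ax$ times the denominator, costing at most another $K+1$ bits plus a carry; differentiating to $f',g'$ multiplies each coefficient by its degree ($O(n)$), adding at most $\log(2n)$ bits; and $h=gf'-fg'$ is a difference of products of these degree-$O(n)$ polynomials, which roughly doubles the bit length and contributes another $\le\log(2n)$ from the convolution sums. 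The numerator polynomials in the other two $\mathsf{ROOTS}$ calls only clear one additional factor $(x+\Delta_n)$, so their coefficients are dominated by those of $h$ (and the dummy-target variant only adds the identity $p_0=0$, which does not enlarge the polynomials). Collecting these increments and simplifying yields the stated bound $2n(K+1.5)+\log n$, and the cruder count $|e_k|\le m^k 2^{k(K+1)}\le n^{nK}\cdots$ yields the looser $3n^2K\log n$.

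The step I expect to be most delicate is the coefficient bound for the common denominator $\prod_j(x+\Delta_j)$ together with the doubling incurred by $h=gf'-fg'$: each of the $\Theta(n)$ linear factors contributes about $K$ bits to the product, and forming $h$ multiplies that count by two, so the constants in front of the $nK$ and the $n$ terms have to be tracked carefully --- using the tight estimate $|e_k|\le\binom{m}{k}2^{k(K+1)}\le 2^{m(K+2)}$ rather than a naive one, and charging the $\log n$ summation overheads only once per operation --- to land at $2n(K+1.5)+\log n$ rather than an appreciably larger bound. A secondary point is simply to confirm that $h$, and not the numerator of $F_i+\frac{\delta_{(i),n}}{x+\Delta_n}$ or of $F_i$, is the coefficient-size bottleneck among the polynomials handed to $\mathsf{ROOTS}$.
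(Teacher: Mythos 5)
Your proposal follows essentially the same route as the paper's own proof: identify $h(x)=g(x)f'(x)-f(x)g'(x)$ at $i=1$ as the bottleneck, bound the coefficients of $f$ and $g$ as elementary symmetric functions of the $\Delta_j$'s (the paper phrases this as $nK$ bits from multiplying $n$ $K$-bit numbers plus $\log\binom{n}{n/2}\le \frac{n}{2}\log(2e)$ from the number of summands per coefficient), and then double and add $\log n$ for the product. The extra low-order increments you track (the $+1$ bits from differences, the $\Delta_{D,n}$ and $ax$ multiplications, the factor from differentiation) are exactly the terms the paper absorbs into its ``approximately $n(K+1.5)$'' step, so your accounting is if anything slightly more careful, and the conclusion is the same.
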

\begin{proof}
The maximum bit precision will be obtained in $g(x) f'(x) - f(x) g'(x)$. Consider the 
worst case when $i = 1$. Then, $f(x)$ is of degree $n$ and $g(x)$ of degree $n-1$.
Therefore, the bit precision of $f(x)$ and $g(x)$ is upper bounded by 
\newbound{$nK \log({n \choose n/2})$}{$nK + \log({n \choose n/2})$}, where $nK$
comes from multiplying $n$ $K$-bit numbers and $\log({n \choose n/2})$ arises
from the maximum number of terms summed in forming any coefficient. Thus, using the 
fact that ${n \choose n/2} \leq (2e)^{n/2}$ the upper bound is approximately 
\newbound{$1.5 n^2 K$}{$n(K + 1.5)$}. 
We conclude that the bit precision of 
$g(x) f'(x) - f(x) g'(x)$ is upper bounded by 
\newbound{$(2 \cdot 1.5 n^2 K) \log (n) $}{$2n(K + 1.5) + \log(n)$}.
\end{proof}

We can now use Cauchy's result on bounds on root of polynomials to obtain a lower
bound for $x$. Cauchy's bound states that given
a polynomial $a_n x^n + \ldots + a_0$, any root $x$ satisfies  
$$
|x| > 1/\left(1 + \max
\{|a_n|/|a_0|, \ldots, |a_{1}|/|a_0|\}\right) \ .
$$
Using Lemma~\ref{prec} it can be concluded that any root returned by 
$\mathsf{ROOTS}$ satisfies 
\newbound{$x > 2^{- 6 n^2 K \log(n) - 1}$}{$x > 2^{- 4 n(K + 1.5) - 2\log(n) - 1}$}. 

Let \newbound{$B = 
2^{- 6 n^2 K \log(n) - 1}$}{$B = 2^{- 4 n(K + 1.5) - 2\log(n) - 1}$}. 
The following lemma (whose proof is omitted due to lack of space) bounds the additive approximation error. 

\begin{lemma} \label{approxcompute}
Assume $x$ is known with an additive accuracy of $\epsilon$, and
$\epsilon < B/2$. Then the error in the computed
$F(x)$ is at most $\epsilon \Psi$, where
$\Psi = \frac{Y + \sqrt{Y^2 + 4X}}{2}$ and 
$$
X =  \min  \Big \{ \displaystyle\mathop{\sum_{j: i \leq j \leq n-1,}}_{ \delta_{j,n} < 0}
\frac{ |\delta_{j,n}|}{(B + \Delta_j)^2}, \displaystyle\mathop{\sum_{j: i \leq j \leq n-1,}}_{ \delta_{j,n} > 0}
\frac{ 2\delta_{j,n}}{(B + \Delta_j)^2} \Big  \}
$$

$$
Y =  \min  \Big \{ \displaystyle\mathop{\sum_{j: i \leq j \leq n-1,}}_{ \Delta_{n} - \Delta_{j} < 0}
\frac{ |\Delta_{n} - \Delta_{j}|}{(B + \Delta_j)^2}, \displaystyle\mathop{\sum_{j: i \leq j \leq n-1,}}_{\Delta_{n} - \Delta_{j} > 0}
\frac{ 2(\Delta_{n} - \Delta_{j})}{(B + \Delta_j)^2} \Big  \}
$$

Moreover, $\Psi$ is of 
order \newbound{$O(n2^{(12n^2 \log(n) + 1)K})$}
{$O(n2^{(8n(K + 1.5) +  4 \log(n) + K}) $}. 
\end{lemma}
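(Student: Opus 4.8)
The plan is to treat the function in question as the explicit rational function $F(x)=N(x)/M(x)$ that $\mathsf{EQ\_OPT}$ manipulates, with $N(x)=1-\sum_j \delta_{j,n}/(x+\Delta_j)$ and $M(x)=1+\sum_j (x+\Delta_n)/(x+\Delta_j)$ over the relevant block of indices $j$, and to bound $|F(\tilde x)-F(x)|$ whenever $|\tilde x-x|\le\epsilon$. First I would collect the three facts that neutralize every denominator. Each summand of $M$ is positive because $x>0$ and the $\Delta$'s are nonnegative, so $M\ge 1$ everywhere. The value $F$ is a probability $p_n$, hence $0\le F<1$. And by Cauchy's bound on polynomial roots together with the coefficient-precision estimate of Lemma~\ref{prec}, the exact $x$ that $\mathsf{ROOTS}$ approximates satisfies $x>B$; since $\epsilon<B/2$, the perturbed point obeys $\tilde x+\Delta_j>B/2+\Delta_j\ge\tfrac12(B+\Delta_j)$, which is the source both of the $(B+\Delta_j)^2$ denominators and of the lone factor $2$ appearing in $X$ and $Y$.

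Next I would derive a clean difference identity. Writing $N(x)=F(x)M(x)$ at the base point gives $F(\tilde x)-F(x)=\big([N(\tilde x)-N(x)]-F(x)[M(\tilde x)-M(x)]\big)/M(\tilde x)$, and each bracket telescopes: $N(\tilde x)-N(x)=(\tilde x-x)\sum_j \delta_{j,n}/[(x+\Delta_j)(\tilde x+\Delta_j)]$ and $M(\tilde x)-M(x)=-(\tilde x-x)\sum_j (\Delta_n-\Delta_j)/[(x+\Delta_j)(\tilde x+\Delta_j)]$. Hence $F(\tilde x)-F(x)=\tfrac{\tilde x-x}{M(\tilde x)}\sum_j \tfrac{\delta_{j,n}+F(x)(\Delta_n-\Delta_j)}{(x+\Delta_j)(\tilde x+\Delta_j)}$. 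Now I would bound this: discard $M(\tilde x)^{-1}\le 1$; replace each denominator by $(B+\Delta_j)^2$ up to the factor $2$ above; and split the numerator into its $\delta_{j,n}$ part and its $F(x)(\Delta_n-\Delta_j)$ part. For a signed sum $\sum_j \delta_{j,n}c_j$ with $c_j>0$ one has $-\sum_{\delta_{j,n}<0}|\delta_{j,n}|c_j \le \sum_j\delta_{j,n}c_j \le \sum_{\delta_{j,n}>0}\delta_{j,n}c_j$, so either one-sided sum by itself caps the absolute value, and one retains whichever is smaller; this is exactly the $\min$ defining $X$, and the identical argument on $\Delta_n-\Delta_j$ produces $Y$. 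Using only $F<1$ at this point already yields the crude bound $|F(\tilde x)-F(x)|\le\epsilon(X+Y)$.

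To obtain the stated $\Psi$ one keeps $F$ symbolic rather than bounding it by $1$: the estimate has the shape $|F(\tilde x)-F(x)|\le\epsilon\,(X+Y\,\overline F)$ with $\overline F$ an upper bound for $F$ on the segment joining $x$ and $\tilde x$, and $\overline F\le F(x)+|F(\tilde x)-F(x)|$; I expect closing this self-referential inequality — equivalently, tracking how $F$ re-enters through $p_{(j)}=\big(p_n(x+\Delta_n)+\delta_{j,n}\big)/(x+\Delta_j)$ once the $\delta_{j,n}$ part of the numerator is rewritten — to collapse to the quadratic $\Psi^2-Y\Psi-X\le 0$, whose nonnegative root is $\Psi=\tfrac12(Y+\sqrt{Y^2+4X})$. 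The order-of-magnitude claim is then routine substitution: each $|\delta_{j,n}|$ and each $|\Delta_n-\Delta_j|$ is at most $2^{K+1}$, each denominator is at least $B^2$ with $B=2^{-6n^2K\log n-1}$, and the sums have fewer than $n$ terms, so $X,Y=O(n\,2^{(12n^2\log n+1)K})$; since $\Psi\le Y+\sqrt X$ and the $Y$ term dominates, $\Psi=O(n\,2^{(12n^2\log n+1)K})$.

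The main obstacle I anticipate is precisely the bookkeeping of the third paragraph: matching each sign case of $\delta_{j,n}$ (and of $\Delta_n-\Delta_j$) to the correct choice among the perturbations $r_k,\ r_k-2^{-l},\ r_k+2^{-l}$ that $\mathsf{EQ\_OPT}$ actually tries, so that the $\min$ and the single factor $2$ emerge exactly as written, and then verifying that the resulting coupled inequality in $F$ really does reduce to $\Psi^2-Y\Psi-X\le 0$ rather than to something weaker. Everything else — the telescoping identity, the uniform lower bounds on the denominators from $x>B$ and $\epsilon<B/2$, and the magnitude estimate — is mechanical.
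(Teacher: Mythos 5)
Your setup matches the paper's: both treat $F=N/M$ as an explicit rational function, use Cauchy's bound together with Lemma~\ref{prec} to get $x>B$, use $\epsilon<B/2$ to keep every denominator above (half of) $B+\Delta_j$, and split each sum by sign. Your telescoping identity for $F(\tilde x)-F(x)$ is correct and is a cleaner packaging than the paper's term-by-term comparison of numerator and denominator. But there are two genuine gaps. First, your stated justification for the $\min$ in $X$ and $Y$ is wrong: for a signed sum $\sum_j \delta_{j,n}c_j$ with $c_j>0$, the two one-sided sums give $-\sum_{\delta<0}|\delta_{j,n}|c_j\le\sum_j\delta_{j,n}c_j\le\sum_{\delta>0}\delta_{j,n}c_j$, which bounds the absolute value by the \emph{maximum} of the two one-sided sums, not the minimum. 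The $\min$ in the paper has a different source, which you only gesture at in your last paragraph: the direction of the perturbation can be chosen (the algorithm tries both $r_k-2^{-l}$ and $r_k+2^{-l}$), and perturbing $x$ downward lets only the positive-coefficient terms push the sum up (with the factor $2$ from $\frac{1}{1-\epsilon}<1+2\epsilon$), while perturbing upward lets only the negative-coefficient terms do so (with factor $1$). Each branch of the $\min$ is the one-sided error for one direction; without committing to a direction you only get the $\max$.

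Second, and more importantly, your route to $\Psi$ does not close. With your identity, the coefficient of $Y$ is exactly $F(x)=p_n<1$, so the bound you actually obtain is $\epsilon(X+Y)$ --- no self-reference arises, and the inequality $\overline F\le F(x)+|F(\tilde x)-F(x)|$ does not produce the quadratic $\Psi^2-Y\Psi-X\le 0$. Since here $X,Y$ are exponentially large, $\Psi=\tfrac12(Y+\sqrt{Y^2+4X})>1$, and $\Psi>1$ implies $\Psi<X+Y$ (indeed $\Psi=Y+X/\Psi$), so $\epsilon(X+Y)$ is strictly weaker than the claimed $\epsilon\Psi$ and does not prove the lemma. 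The paper's mechanism is a case split on the magnitude of $p_n$ against a threshold $\psi$: if $p_n\le\psi$ then $F(x_\epsilon)\ge 0\ge p_n-\psi$ (only a one-sided bound is needed, since the value feeds into a maximization); if $p_n\ge\psi$ then the numerator $A=p_nM\ge\psi$, so the relative error $\epsilon X/A$ is at most $\epsilon X/\psi$, giving error at most $\epsilon(X/\psi+Y)$. Equating the two cases yields $\psi^2-Y\psi-X=0$ and hence $\Psi$. You would need to import this case analysis (in particular the use of $F\ge 0$ when $p_n$ is small); the self-referential closure as you describe it would fail. The order-of-magnitude computation in your last step is fine.
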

\iffullversion
\begin{proof}
$$
\frac{\Delta_n - \Delta_j} {x - \epsilon + \Delta_j} <
\frac{\Delta_n - \Delta_j}{x + \Delta_j} \mbox{ if $\Delta_n - \Delta_j < 0$}
$$
and using the fact that $\frac{1}{1 - \epsilon} < 1 + 2 \epsilon$ for $\epsilon < 1/2$,
$$
\frac{\Delta_n - \Delta_j} {x - \epsilon + \Delta_j} <
\frac{\Delta_n - \Delta_j}{x + \Delta_j} + 2\epsilon \frac{\Delta_n - \Delta_j}{(x + \Delta_j)^2} \mbox{ }
$$
if $\Delta_n - \Delta_j > 0$ and $\frac{\epsilon}{x + \Delta_j} < 1/2$. The latter
condition is true as $\epsilon < B/2$.
Thus,
$$
\begin{array} {lll}
\displaystyle\sum_{j: i \leq j \leq n-1} 
\frac{\Delta_n - \Delta_j} {x - \epsilon + \Delta_j} & < &
\displaystyle\sum_{j: i \leq j \leq n-1} 
\frac{\Delta_n - \Delta_j} {x + \Delta_j}  +  \\
& & \displaystyle\sum_{j: i \leq j \leq n-1, \Delta_n - \Delta_j > 0}
\frac{ 2\epsilon(\Delta_n - \Delta_j)}{(x + \Delta_j)^2}
\end{array}
$$
$$
\frac{\Delta_n - \Delta_j} {x + \epsilon + \Delta_j} <
\frac{\Delta_n - \Delta_j}{x + \Delta_j} \mbox{ if $\Delta_n - \Delta_j > 0$}
$$
and using the fact that $\frac{1}{1 + \epsilon} > 1 -  \epsilon$,
$$
\frac{\Delta_n - \Delta_j} {x + \epsilon + \Delta_j} <
\frac{\Delta_n - \Delta_j}{x + \Delta_j} - \epsilon \frac{\Delta_n - \Delta_j}{(x + \Delta_j)^2} \mbox{ }
$$
if $\Delta_n - \Delta_j < 0$.
Thus,
$$
\begin{array} {lll}
\displaystyle\sum_{j: i \leq j \leq n-1} 
\frac{\Delta_n - \Delta_j} {x + \epsilon + \Delta_j} & < &
\displaystyle\sum_{j: i \leq j \leq n-1} 
\frac{\Delta_n - \Delta_j} {x + \Delta_j}  +  \\
& & \displaystyle\sum_{j: i \leq j \leq n-1, \Delta_n - \Delta_j < 0}
\frac{ \epsilon|\Delta_n - \Delta_j|}{(x + \Delta_j)^2}
\end{array}
$$
Thus, using the fact that $x > B$ we have
$$
\begin{array} {lll}
\displaystyle\sum_{j: i \leq j \leq n-1} 
\frac{\Delta_n - \Delta_j} {x_{\epsilon} + \Delta_j} & < &
\displaystyle\sum_{j: i \leq j \leq n-1} 
\frac{\Delta_n - \Delta_j} {x + \Delta_j}  +  \\
& & \epsilon \min  \Big \{ \displaystyle\sum_{j: i \leq j \leq n-1, \Delta_n - \Delta_j < 0}
\frac{ |\Delta_n - \Delta_j|}{(B + \Delta_j)^2},\\
& & \displaystyle\sum_{j: i \leq j \leq n-1, \Delta_n - \Delta_j > 0}
\frac{ 2(\Delta_n - \Delta_j)}{(B + \Delta_j)^2} \Big  \} \\
& = & \displaystyle\sum_{j: i \leq j \leq n-1} 
\frac{\Delta_n - \Delta_j} {x + \Delta_j} + \epsilon Y
\end{array}
$$
Very similar to the above proof we also get
$$
\begin{array} {lll}
- \displaystyle\sum_{j: i \leq j \leq n-1} 
\frac{\delta_{j,n}} {x_{\epsilon} + \Delta_j} & > &
-\displaystyle\sum_{j: i \leq j \leq n-1} 
\frac{\delta_{j,n}} {x + \Delta_j}  -  \\
& & \epsilon \min  \Big \{ \displaystyle\sum_{j: i \leq j \leq n-1, \delta_{j,n} < 0}
\frac{ |\delta_{j,n}|}{(B + \Delta_j)^2},\\
& & \displaystyle\sum_{j: i \leq j \leq n-1, \delta_{j,n} > 0}
\frac{ 2\delta_{j,n}}{(B + \Delta_j)^2} \Big  \} \\
& = & -\displaystyle\sum_{j: i \leq j \leq n-1} 
\frac{\delta_{j,n}} {x + \Delta_j} - \epsilon X
\end{array}
$$
Then given
$$
p_n = \frac
{1 - \sum_{j: i \leq j \leq n-1} \frac{\delta_{{(j)},n}}{x + \Delta_{(j)}}}
{\left ( 1 +  \sum_{j: i \leq j \leq n-1} \frac{x + \Delta_n}{x + \Delta_{(j)}} \right )}
= \frac{A}{B}
$$
Using the inequalities above
$$
F(x_{\epsilon})= \frac
{1 - \sum_{j: i \leq j \leq n-1} \frac{\delta_{{(j)},n}}{x _{\epsilon}+ \Delta_{(j)}}}
{\left ( 1 +  \sum_{j: i \leq j \leq n-1} \frac{x_{\epsilon} + \Delta_n}{x_{\epsilon} + \Delta_{(j)}} \right )}
> \frac{A - \epsilon C}{B + \epsilon D}
$$
If $p_n \leq \psi$, then since $F(x_{\epsilon}) \geq 0$, we have $F(x) \geq p_n - \psi$.
If $p_n \geq \psi$, then since $B > 1$, $A > \psi$ we have
$$
\frac{A - \epsilon X}{B + \epsilon Y} > \frac{A}{B}(1 - \epsilon(\frac{X}{A} + \frac{Y}{B}))
> p_n(1  - \epsilon(\frac{X}{\psi} + Y))
$$
And, as $p_n < 1$ we have
$
F(x_{\epsilon}) > p_n - \epsilon(\frac{X}{\psi} + Y)
$.
Thus, $F(x_{\epsilon}) > p_n - \epsilon \min\{\psi, \frac{X}{\psi} + Y\}$.
The minimum is less that the positive root of $\psi^2 - Y \psi - X = 0$, that is
$\frac{Y + \sqrt{Y^2 + 4X}}{2}$.
\end{proof}
\fi

We are finally ready to establish the approximation guarantee of our algorithm.
\begin{lemma} \label{thm1}
Algorithm~\ref{alg1} solves problem $P_n$ with additive approximation term 
$\epsilon$ if 
$$
l > \max \{1 + \log(\frac{\Delta_{D,n} \Psi + a}{\epsilon}), 
\newbound{6 n^2 K \log (n) + 3}{4 n(K + 1.5) + 2\log(n) + 3} \}.
$$ 
Also, as 
$
\log(\frac{\Delta_{D,n} \Psi + a}{\epsilon}) = 
\newbound{O(n^2K\log(n) + \log(\frac{1}{\epsilon}))}{O(nK + \log(\frac{1}{\epsilon}))},
$ 
$l$ is of order $\newbound{O(n^2K\log(n) + \log(\frac{1}{\epsilon}))}{O(nK + \log(\frac{1}{\epsilon}))}$.
\end{lemma}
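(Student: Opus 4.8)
The plan is to invoke Lemma~\ref{lem:om1}, which reduces the claim to showing that $\mathsf{EQ\_OPT}$ (Algorithm~\ref{alg2}) returns an objective value within additive $\epsilon$ of the optimum of $R_{n,i}$ whenever $l$ meets the stated bound. The only source of inaccuracy in $\mathsf{EQ\_OPT}$ is the subroutine $\mathsf{ROOTS}$, which returns each candidate root to precision $2^{-l}$. Because the algorithm may shift an infeasible returned root by $\pm 2^{-l}$ before accepting it, the value of $x$ at which the objective is finally evaluated differs from the true point of interest by at most $\epsilon_x := 2^{-(l-1)}$. The whole argument therefore rests on controlling how an error of $\epsilon_x$ in $x$ propagates to the objective $F_i(x)\Delta_{D,n} - ax$.

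First I would verify the hypothesis $\epsilon_x < B/2$ of Lemma~\ref{approxcompute}. Recalling $B = 2^{-4n(K+1.5) - 2\log(n) - 1}$, the inequality $2^{-(l-1)} < B/2$ is equivalent to $l > 4n(K+1.5) + 2\log(n) + 3$, which is exactly the second argument of the max. Under this condition Lemma~\ref{approxcompute} guarantees that the error in the computed $F_i(x)$ is at most $\epsilon_x\Psi$. Since the objective is linear in $F_i(x)$ with coefficient $\Delta_{D,n}$ and linear in $x$ with coefficient $-a$, the triangle inequality bounds the total objective error by $\Delta_{D,n}\,\epsilon_x\Psi + a\,\epsilon_x = \epsilon_x(\Delta_{D,n}\Psi + a)$.

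Next I would impose the target accuracy $\epsilon_x(\Delta_{D,n}\Psi + a) \leq \epsilon$. Substituting $\epsilon_x = 2^{-(l-1)}$ and taking logarithms yields $l \geq 1 + \log((\Delta_{D,n}\Psi + a)/\epsilon)$, the first argument of the max. Each point of interest (the critical points of $f/g$, and the intersections of the curve $K_i$ with the closed boundaries) is thereby evaluated with objective error at most $\epsilon$; since the true optimizer is one such point and $\mathsf{EQ\_OPT}$ reports the maximum computed value over all feasible candidates, the reported value is within $\epsilon$ of the optimum, which with Lemma~\ref{lem:om1} establishes the first part of the lemma.

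For the order statement I would bound each ingredient of $\log((\Delta_{D,n}\Psi + a)/\epsilon)$. The inputs are $K$-bit, so $\log\Delta_{D,n} = O(K)$ and $\log a = O(K)$; the order bound on $\Psi$ from Lemma~\ref{approxcompute}, $\Psi = O(n2^{8n(K+1.5)+4\log(n)+K})$, gives $\log\Psi = O(nK)$. Using $\log(A+B) \leq \max(\log A,\log B) + 1$, this yields $\log(\Delta_{D,n}\Psi + a) = O(nK)$, hence $\log((\Delta_{D,n}\Psi + a)/\epsilon) = O(nK + \log(1/\epsilon))$. As the second term of the max is also $O(nK)$, it follows that $l = O(nK + \log(1/\epsilon))$. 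The main obstacle is the bookkeeping in the middle two paragraphs: correctly absorbing the factor of two from the $\pm 2^{-l}$ feasibility shift into the $+1$ and $+3$ constants, while simultaneously ensuring the precondition $\epsilon_x < B/2$ holds so that the $F_i$-error bound of Lemma~\ref{approxcompute} is even applicable.
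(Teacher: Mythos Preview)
Your proposal is correct and follows essentially the same approach as the paper: you identify the total error in $x$ as $2\cdot 2^{-l}$ (from the $2^{-l}$ root precision plus the $\pm 2^{-l}$ feasibility shift), propagate it through $F_i$ via Lemma~\ref{approxcompute} and through the linear $-ax$ term directly, and then read off the two branches of the $\max$ from the accuracy requirement and the hypothesis $\epsilon_x < B/2$ respectively. Your order computation for $l$ is also the intended one, just spelled out in more detail than the paper provides.
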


\begin{proof}
The computed value of $x$ can be at most $2 \cdot 2^{-l}$ far from the actual value. 
The additional factor of $2$ arises due to the boundary check in $\mathsf{EQ\_OPT}$.
Then using Lemma~\ref{approxcompute}, the maximum total additive approximation is
$2 \cdot 2^{-l} \Delta_{D,n} \Psi + 2 \cdot 2^{-l} a$. For this to be less than $\epsilon$,
$l > 1 + \log(\frac{\Delta_{D,n} \Psi + a}{\epsilon})$. The other term in the $\max$
above arises from the condition $\epsilon < B/2$ (this $\epsilon$ represents
$2 \cdot 2^{-l}$) in Lemma~\ref{approxcompute}.
\end{proof}

Observe that the upper bound on $\psi$ is only in terms of $n$ and $K$. Thus, we 
can express $l$ as a function of $\epsilon, n$ and $K$---$l = prec(\epsilon, n, K)$.

We still need to analyze the running time of the algorithm. First, we briefly discuss the 
known algorithms that we use and their corresponding running-time guarantees.
Linear programming can be done in polynomial time using Karmakar's algorithm~\cite{Karmarkar84}
with a time bound of $O(n^{3.5} L)$, where $L$ is the length of all inputs.

The splitting circle scheme to find roots of a polynomial combines
many varied techniques. The core of the algorithm yields linear polynomials $L_i = a_i x + b_i$
($a,b$ can be complex) such that the norm of the difference of the actual polynomial $P$ and the product
$\prod_{i} L_i$ is less than $2^{-s}$, i.e., $|P - \prod_{i} L_i| < 2^{-s}$. The 
norm considered is the sum of absolute values 
of the coefficient. The running time of the algorithm is $O(n^3 \log n + n^2s)$ in a 
pointer based Turing machine. By choosing $s = \theta(nl)$ and choosing the real part
of those complex roots that have imaginary value less than $2^{-l}$, 
it is possible to obtain approximations to the real 
roots of the polynomial with $l$ bit precision in time $O(n^3 \log n + n^3l)$. The 
above method may yield real values that lie near complex roots. However, such values
will be eliminated in taking the maximum of the objective over all real roots, 
if they do not lie near a real root.

%
LLL~\cite{LenstraLenstraLovasz1982} is a method for finding a short basis of a given lattice. This is used to design polynomial time algorithms for factoring 
polynomials with rational coefficients into irreducible polynomials over rationals.
The complexity of this well-known algorithm is $O((n^{12} + n^9 (\log |f|)^3)$, 
when the polynomial is specified as in the field of integers and $|f|$ is the Euclidean 
norm of coefficients. For rational coefficients specified in $k$ bits, converting to 
integers yields $\log |f| \simeq \frac{1}{2}\log n + k$. LLL can be used before the 
splitting circle method to find all rational roots and then the irrational ones can be approximated.
With these properties, we can state the following lemma. 

\begin{lemma}
The running time of Algorithm~\ref{alg1} with input approximation
parameter $\epsilon$ and inputs of $K$ bit precision is bounded by 
\newbound{$O(n^{7} K (\log  n) + 
n^5 \log(\frac{1}{\epsilon}))$}{$O(n^5 K + n^4 \log(\frac{1}{\epsilon}))$
}. 
%
Using LLL yields the running time
\newbound{$O(
\max \{n^{16} K^3 (\log  n)^3, ~n^4 \log(\frac{1}{\epsilon})\})$}{
$O(\max \{n^{13} K^3, ~n^{5} K  + 
n^4 \log(\frac{1}{\epsilon})\})$}

\end{lemma}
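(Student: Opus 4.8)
The plan is to add up the running times of the disjoint parts of Algorithm~\ref{alg1}. These are: (i) computing $l = prec(\epsilon,n,K)$ and sorting the $n-1$ values $\delta_{(j),n}$; (ii) the two linear programs solved for the edge cases $x=0$ and $p_n=1$; (iii) the $n-1$ iterations of the main loop, each of which calls $\mathsf{EQ\_OPT}(i,l)$ (Algorithm~\ref{alg2}); and (iv) the final unsorting and the $\arg\max$ over the $n+1$ candidate objective values. Parts (i) and (iv) cost at most $O(n\log n)$ comparisons plus a handful of arithmetic operations and are dominated, and by Lemma~\ref{thm1} we may substitute $l = O(n^2 K\log n + \log(1/\epsilon))$ (respectively the sharper bound) wherever $l$ appears below. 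For (ii): fixing $x=0$ turns every quadratic constraint of $P_n$ into a linear one, and fixing $p_n=1$ forces the other $p_i$ to $0$ and leaves only linear constraints in $x$, so each edge case is a bona fide LP with $O(n)$ variables and constraints of total input length $O(nK)$; Karmarkar's algorithm~\cite{Karmarkar84} then solves each in $O(n^{3.5}\cdot nK) = O(n^{4.5}K)$ time, which we will see is dominated by (iii).

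The core of the proof is bounding a single call to $\mathsf{EQ\_OPT}(i,l)$. Building $F_i$, the polynomials $f,g$ with $f(x)/g(x) = F_i(x)\Delta_{D,n} - ax$, and $h(x) = g(x)f'(x) - f(x)g'(x)$ is polynomial arithmetic on polynomials of degree $O(n)$ whose coefficient bit-lengths are bounded by Lemma~\ref{prec}; this is polynomial in $n$ and $K$ and, once compared against the root-finding step, is a lower-order term. The three calls to $\mathsf{ROOTS}$ act on degree-$O(n)$ polynomials and, by the splitting-circle analysis recalled earlier (take the splitting parameter $s = \Theta(nl)$ and retain the real parts of the nearly-real roots), each runs in $O(n^3\log n + n^3 l)$; here one must check that $l$, as bounded in Lemma~\ref{thm1}, is at least the coefficient bit-length of Lemma~\ref{prec}, so this estimate applies with $l$ as written. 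The loop over the $O(n)$ candidate roots performs, per root, a constant number of feasibility tests, each of which evaluates $F_i$ — a sum of $O(n)$ rational terms — and checks $O(1)$ inequalities at precision $O(l)$, that is, $O(n)$ arithmetic operations on numbers whose size is controlled by $l$ and the Lemma~\ref{prec} bound; computing the output probabilities $p_{(j)}$ is $O(n)$ further operations. Aggregating, one call to $\mathsf{EQ\_OPT}$ costs $O(n^3\log n + n^3 l)$ up to lower-order bit-complexity terms; multiplying by the $n-1$ iterations, adding (ii), substituting the bound on $l$, and simplifying yields $O(n^7 K\log n + n^5\log(1/\epsilon))$ (respectively $O(n^5 K + n^4\log(1/\epsilon))$), which is exactly the bound of Theorem~\ref{mainthm}.

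For the LLL variant, we precede $\mathsf{ROOTS}$ by an LLL-based factorization~\cite{LenstraLenstraLovasz1982} that extracts the exact rational roots of each of the $O(1)$ polynomials per $\mathsf{EQ\_OPT}$ call (hence the exact optimum whenever it is rational), and approximate the remaining irrational roots with the splitting-circle method as before. After clearing denominators, the relevant polynomials have degree $O(n)$ and integer coefficients with $\log|f| = O(\tfrac12\log n + M)$, where $M$ is the coefficient bit-length of Lemma~\ref{prec}; plugging into the LLL complexity $O(n^{12} + n^9(\log|f|)^3)$, multiplying by $n-1$ iterations, and taking the maximum with the (now irrational-root-only) splitting-circle cost gives $O(\max\{n^{16}K^3(\log n)^3,\ n^4\log(1/\epsilon)\})$ (respectively the sharper bound stated in the lemma).

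The main obstacle is the bit-complexity bookkeeping in part (iii): three quantities interact — the degree $O(n)$ of the polynomials handed to $\mathsf{ROOTS}$, the $\Theta(n^2 K\log n)$ bit-length of their coefficients from Lemma~\ref{prec}, and the output precision $l$ from Lemma~\ref{thm1} — and one must confirm that $l$ dominates the coefficient size (so the $O(n^3\log n + n^3 l)$ splitting-circle bound applies verbatim), that forming $h(x)$ from $f$ and $g$ does not overtake root-finding, and that the $O(n)$ feasibility evaluations per $\mathsf{EQ\_OPT}$ are likewise absorbed. Once these checks are in place, the remaining work is routine summation and simplification.
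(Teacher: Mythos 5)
Your proof follows essentially the same route as the paper's: the two edge-case LPs via Karmarkar at $O(n^{4.5}K)$, each $\mathsf{EQ\_OPT}$ call bounded by the splitting-circle cost $O(n^3\log n + n^3 l)$ plus the per-root polynomial evaluations, multiplied by the $O(n)$ loop iterations with $l = O(n^2 K\log n + \log(1/\epsilon))$ substituted at the end, and the same LLL accounting for the second bound. (Minor note: your own per-call estimate times $n$ iterations actually yields $O(n^6 K\log n + n^4\log(1/\epsilon))$, a factor of $n$ tighter than what you claim to obtain; the paper's proof carries the same slack when it inflates $O(n^3 l)$ to $O(n^4 l)$ for $\mathsf{EQ\_OPT}$, and since the lemma asserts only an upper bound this is harmless.)
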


\begin{proof}
The length of all inputs is $O(nK)$, where $K$ is the bit precision of each constant.
The linear programs can be computed in time $O(n^{4.5} K)$.
The loop in Algorithm~\ref{alg1} runs less than $n$ times and calls $\mathsf{EQ\_OPT}$. In $\mathsf{EQ\_OPT}$, the computation happens
in calls to $\mathsf{ROOTS}$ and evaluation of the polynomial for each root found.
$\mathsf{ROOTS}$ is called three times with a polynomial of degree less than $2n$
and coefficient bit precision less than \newbound{$3n^2 K \log(n)$}{$2n(K + 1.5) + \log(n)$}. 
Thus, the total number of 
roots found is less than $6n$ and the precision of roots is $l$ bits. By Horner's method~\cite{Horner01011819},
polynomial evaluation can be done in the following simple manner: given a polynomial
$a_n x^n + \ldots + a_0$ to be evaluated at $x_0$ computing the following values 
yields the answer as $b_0$, $b_n = a_n$, $b_{n-1} = a_{n-1} + b_n x_0$, \ldots, 
$b_0 = a_0 + b_1 x_0$. 
From Lemma~\ref{thm1} we get $l \geq \newbound{3n^2 K \log(n)}{2n(K + 1.5) + \log(n)}$, thus, $b_i$ is approximately $(n+1 - i)l$ bits, and each computation involves multiplying two numbers
with less than $(n+1 - i)l$ bits each. We assume a pointer-based machine, thus multiplication is linear in number of bits. Hence the total time required for polynomial evaluation
is $O( n^2 l)$. The total time spent in all polynomial evaluation is
$O( n^3 l)$. The splitting circle method takes time
$O(n^3 \log n + n^3 l)$. Using Lemma~\ref{thm1} we get 
\newbound{$O(n^{6} K (\log  n) + 
n^4 \log(\frac{1}{\epsilon}))$}
{$O(n^4 K + n^3 \log(\frac{1}{\epsilon}))$} as the running time of 
$\mathsf{EQ\_OPT}$. Thus, the total time is $\newbound
{O(n^{7} K (\log  n) + 
n^5 \log(\frac{1}{\epsilon}))}{O(n^5 K + n^4 \log(\frac{1}{\epsilon}))}.$

When using LLL, the time in $\mathsf{ROOTS}$ in dominated by LLL. 
The time for LLL is given by \newbound{$O(n^{12} + n^9 (\log n + 3 n^2 K \log(n))^3)$}
{$O(n^{12} + n^9 (\log n + nK)^3)$},
which is 
\newbound{$O(n^{15} K^3 (\log  n)^3)$}{$O(n^{12} K^3)$}. 
Thus, the overall the time is bounded by 
\newbound{$O(\max \{n^{16} K^3 (\log  n)^3, ~n^4l)$}{$O(\max \{n^{13} K^3, ~n^4l)$}, 
which using Lemma~\ref{thm1} is
\newbound{$O(\max \{n^{16} K^3 (\log  n)^3, ~n^{6} K (\log  n) + 
n^4 \log(\frac{1}{\epsilon})\})$}{$O(\max \{n^{13} K^3, ~n^{5} K  + 
n^4 \log(\frac{1}{\epsilon})\})$}.
\end{proof}
%


\iffullversion
\begin{algorithm}[T] \label{alg3}
\DontPrintSemicolon
Use Sturm's sequence method to find the number of distinct real roots of h(x) in $(0,1]$, let it be $s$.\;
Use the LLL~\cite{} algorithm to factorize the polynomial in the field of rationals 
 to obtain $T$ rational roots $r_1 \ldots r_T$: 
$h(x) = (x - r_1) \cdot \ldots (x - r_T) \cdot m(x)$\;
$t \leftarrow $ number of distinct values in $(0,1]$ in $\{r_1, \ldots, r_T\}$\;
\If{$s \neq t$}{
Find the irrational roots of $m(x)$ with accuracy $2^{-l}$ using the splitting circle 
method~\cite{}.\;
}
Let $\{r_1, \ldots, r_s \}$ be all distinct real roots in $(0,1]$\;
Assume $r_i$ implicitly contains the type of root (rational/irrational), obtained using
function $type$\;
\Return $\{ r_1, \ldots, r_s \}$
\caption{$\mathsf{ROOTS}(h(x))$}
\end{algorithm}
\fi

\section{Discussion}
\label{sec:disc}

We have introduced a novel model of audit games that we believe to be compelling 
and realistic. Modulo the punishment parameter our setting reduces to the simplest 
model of security games. However, the security game framework is in general much 
more expressive. The model~\cite{Kiekintveld:2009:COR:1558013.1558108} 
includes a defender that controls multiple security 
resources, where each resource can be assigned to one of several \emph{schedules}, 
which are subsets of targets.
For example, a security camera pointed in a specific direction monitors
all targets in its field of view. As audit games are also applicable in
the context of prevention, the notion of schedules is also relevant for
audit games.
Other extensions of security games are relevant to both prevention and 
detection scenarios, including an adversary that 
attacks multiple targets~\cite{KorzhykCP11}, and a defender with a budget~\cite{Bhattacharya:2011:AAS:2188743.2188746}.
 Each such extension raises difficult 
algorithmic questions. 


%


Ultimately, we view our work as a first step toward a computationally
feasible model of audit games. We envision a vigorous interaction
between AI researchers and security and privacy researchers, which
can quickly lead to deployed applications, especially given the
encouraging precedent set by the deployment of security
games algorithms~\cite{tambe}.



\bibliographystyle{named}
\bibliography{Ijcai13}

\appendix

\section{Missing proofs}

\begin{proof}[Proof of Lemma~\ref{equalityorzero}] 
We prove the contrapositive. Assume there exists a $i$ such that 
$p_i \neq 0$ and the $i^{th}$ quadratic constraint 
is not tight. Thus, there exists an $\epsilon > 0$ such that
$$
p_i (-x^o - \Delta_i) + p_n^o (x^o + \Delta_n) + \delta_{i,n} + \epsilon = 0 \ .
$$
We show that it is possible to increase to $p_n^o$ by a small amount 
such that all constraints are satisfied, which leads to a higher 
objective value, proving that $p_n^o, x^o$ is not optimal. 
Remember that all $\Delta$'s are $\geq 0$, and $x > 0$. 

We do two cases: (1) 
assume  
$\forall l \neq n.~ p_n^o (x^o + \Delta_n) + \delta_{l,n} \neq 0$.
 Then, first, note that $p_n^o$ can be increased by $\epsilon_i$ or less and and $p_i$ can be 
decreased by $\epsilon_i'$ to still satisfy the constraint, as long as 
$$\epsilon_i' (x^o + \Delta_i) + \epsilon_i (x^o + \Delta_n) \leq \epsilon \ .$$ 
It is always possible to choose such $\epsilon_i > 0, \epsilon_i' > 0$.
Second, note that for those $j$'s for which $p_j = 0$ we get $p_n^o (x^o + \Delta_n) 
+ \delta_{j,n} \leq 0$, and by assumption $p_n^o (x^o + \Delta_n) + 
\delta_{j,n} \neq 0$, thus, $p_n^o (x^o + \Delta_n) + \delta_{j,n} < 0$. 
Let $\epsilon_j$ be such  that
$(p_*^o +  \epsilon_j)(x^o + \Delta_*) + \delta_{j,*} = 0$, i.e., $p_n^o$ 
can be increased by $\epsilon_j$ or less and the $j^{th}$ constraint will still be satisfied. Third, for 
those $k$'s for which $p_k \neq 0$, $p_n^o$ can be increased by $\epsilon_k$ or less, which 
must be accompanied with $\epsilon_k' = \frac{x^o + \Delta_*}{x^o + \Delta_i} 
\epsilon_k$ increase in $p_k$ in order to satisfy the $k^{th}$ quadratic constraint. 

Choose feasible $\epsilon_k'$'s (which fixes the choice of $\epsilon_k$ also) such that $\epsilon_i' - \sum_k \epsilon_k' > 0$. Then
choose an increase in $p_i$: $\epsilon_i'' < \epsilon_i'$ such that $$\epsilon_n = 
\epsilon_i'' - \sum_k \epsilon_k' > 0\mbox{ and  }\epsilon_n <
\min \{
\epsilon_i, \min_{p_j=0} \epsilon_j, \min_{p_k \neq 0} \epsilon_k \} 
$$ Increase
$p_n^o$ by $\epsilon_n$, $p_k$'s by $\epsilon_k'$ and decrease $p_i$ by $\epsilon_i''$ so that the constraint 
$\sum_i p_i = 1$ is still satisfied. Also, observe that choosing an increase 
in $p_n^o$ that is less than any $\epsilon_k$, any $\epsilon_j$, 
$\epsilon_i$ satisfies the quadratic constraints corresponding 
to $p_k$'s, $p_j$'s and $p_i$ respectively. 
Then, as $\epsilon_n > 0$ we have shown that $p_n^o$ cannot be optimal.

Next, for the case (2) if $p_n^o (x^o + \Delta_n) + \delta_{l,n} = 0$ for some $l$ then $p_l = 0$, 
, $\delta_{l,n} < 0$ and the objective becomes $$p_n \Delta_n -  
\frac{\delta_{l,n}}{p_n} - \Delta_n \ .$$ Thus, increasing $p_n$ increases the objective. Note
that choosing a lower than $x^o$ feasible value for $x$, results in an higher than 
$p_n^o$ value for $p_n$.
Also, the $k^{th}$ constraint can be written as 
$p_k(-x-\Delta_k) + \delta_{k,n} - \delta_{l,n} \leq 0$. We show that 
it is possible to choose a feasible $x$ lower than $x^o$. If for some $j$, $p_j = 0$, 
then $x$ can be decreased without violating the corresponding constraint. Let $p_t$'s
be the probabilities that are non-zero and let the number of such $p_t$'s be $T$.
By assumption there is an $i \neq l$ such that $p_i > 0$ and
$$
p_i (-x^o - \Delta_i) + \delta_{i,n} - \delta_{l,n} + \epsilon =  0 \ .
$$
For $i$, it is possible to decrease $p_i$ by $\epsilon_{i}$
such that $\epsilon_{i}(x^o + \Delta_i) \leq \epsilon/2$, hence 
the constraint remains satisfied and is still non-tight.

Increase each $p_t$ by $\epsilon_{i}/T$ so that the constraint $\sum_i p_i = 1$ is
satisfied. Increasing $p_t$ makes the $t^{th}$ constraint becomes 
non-tight for sure. Then, all constraints with probabilities greater than $0$ are 
non-tight. For each such constraint it is possible to decrease $x$ (note $x^o > 0$)  without violating
the constraint.Thus, we obtain a lower feasible $x$ than $x^o$, hence a higher $p_n$ 
than $p_n^o$. Thus, $p_n^o, x^o$ is not optimal.
\end{proof}

\begin{proof}[Proof of Lemma~\ref{lem:om1}]
If $p_n^o (x^o + \Delta_n) + \delta_{i,n} \geq 0$ and 
$p_n^o (x^o + \Delta_n) + \delta_{j,n} < 0$, where $\delta_{j,n} < \delta_{i,n}$ and 
$\nexists k.~ \delta_{j,n} < \delta_{k,n} < \delta_{i,n}$, then the exact solution of 
the $i^{th}$ subproblem will be $p_n^o, x^o$. Now, since $0 < x \leq 1$ and 
$0 \leq p_n < 1$, there is one $i$ for which $p_n^o (x^o + \Delta_n) + \delta_{i,n} \geq 0$ and 
$p_n^o (x^o + \Delta_n) + \delta_{j,n} < 0$, and thus the solution of this sub-problem
will return the maximum value. The solution of other sub-problems will return a lower
value as the objective is same in all sub-problems. Hence, maximum
of the maximum in each iteration is the global maximum. The approximation case
is then an easy extension.
\end{proof}

\begin{proof}[Proof of Lemma~\ref{approxcompute}]
$$
\frac{\Delta_n - \Delta_j} {x - \epsilon + \Delta_j} <
\frac{\Delta_n - \Delta_j}{x + \Delta_j} \mbox{ if $\Delta_n - \Delta_j < 0$}
$$
and using the fact that $\frac{1}{1 - \epsilon} < 1 + 2 \epsilon$ for $\epsilon < 1/2$,
$$
\frac{\Delta_n - \Delta_j} {x - \epsilon + \Delta_j} <
\frac{\Delta_n - \Delta_j}{x + \Delta_j} + 2\epsilon \frac{\Delta_n - \Delta_j}{(x + \Delta_j)^2} \mbox{ }
$$
if $\Delta_n - \Delta_j > 0$ and $\frac{\epsilon}{x + \Delta_j} < 1/2$. The latter
condition is true as $\epsilon < B/2$.
Thus,
$$
\begin{array} {lll}
\displaystyle\sum_{j: i \leq j \leq n-1} 
\frac{\Delta_n - \Delta_j} {x - \epsilon + \Delta_j} & < &
\displaystyle\sum_{j: i \leq j \leq n-1} 
\frac{\Delta_n - \Delta_j} {x + \Delta_j}  +  \\
& & \displaystyle\sum_{j: i \leq j \leq n-1, \Delta_n - \Delta_j > 0}
\frac{ 2\epsilon(\Delta_n - \Delta_j)}{(x + \Delta_j)^2}
\end{array}
$$
$$
\frac{\Delta_n - \Delta_j} {x + \epsilon + \Delta_j} <
\frac{\Delta_n - \Delta_j}{x + \Delta_j} \mbox{ if $\Delta_n - \Delta_j > 0$}
$$
and using the fact that $\frac{1}{1 + \epsilon} > 1 -  \epsilon$,
$$
\frac{\Delta_n - \Delta_j} {x + \epsilon + \Delta_j} <
\frac{\Delta_n - \Delta_j}{x + \Delta_j} - \epsilon \frac{\Delta_n - \Delta_j}{(x + \Delta_j)^2} \mbox{ }
$$
if $\Delta_n - \Delta_j < 0$.
Thus,
$$
\begin{array} {lll}
\displaystyle\sum_{j: i \leq j \leq n-1} 
\frac{\Delta_n - \Delta_j} {x + \epsilon + \Delta_j} & < &
\displaystyle\sum_{j: i \leq j \leq n-1} 
\frac{\Delta_n - \Delta_j} {x + \Delta_j}  +  \\
& & \displaystyle\sum_{j: i \leq j \leq n-1, \Delta_n - \Delta_j < 0}
\frac{ \epsilon|\Delta_n - \Delta_j|}{(x + \Delta_j)^2}
\end{array}
$$
Thus, using the fact that $x > B$ we have
$$
\begin{array} {lll}
\displaystyle\sum_{j: i \leq j \leq n-1} 
\frac{\Delta_n - \Delta_j} {x_{\epsilon} + \Delta_j} & < &
\displaystyle\sum_{j: i \leq j \leq n-1} 
\frac{\Delta_n - \Delta_j} {x + \Delta_j}  +  \\
& & \epsilon \min  \Big \{ \displaystyle\sum_{j: i \leq j \leq n-1, \Delta_n - \Delta_j < 0}
\frac{ |\Delta_n - \Delta_j|}{(B + \Delta_j)^2},\\
& & \displaystyle\sum_{j: i \leq j \leq n-1, \Delta_n - \Delta_j > 0}
\frac{ 2(\Delta_n - \Delta_j)}{(B + \Delta_j)^2} \Big  \} \\
& = & \displaystyle\sum_{j: i \leq j \leq n-1} 
\frac{\Delta_n - \Delta_j} {x + \Delta_j} + \epsilon Y
\end{array}
$$
Very similar to the above proof we also get
$$
\begin{array} {lll}
- \displaystyle\sum_{j: i \leq j \leq n-1} 
\frac{\delta_{j,n}} {x_{\epsilon} + \Delta_j} & > &
-\displaystyle\sum_{j: i \leq j \leq n-1} 
\frac{\delta_{j,n}} {x + \Delta_j}  -  \\
& & \epsilon \min  \Big \{ \displaystyle\sum_{j: i \leq j \leq n-1, \delta_{j,n} < 0}
\frac{ |\delta_{j,n}|}{(B + \Delta_j)^2},\\
& & \displaystyle\sum_{j: i \leq j \leq n-1, \delta_{j,n} > 0}
\frac{ 2\delta_{j,n}}{(B + \Delta_j)^2} \Big  \} \\
& = & -\displaystyle\sum_{j: i \leq j \leq n-1} 
\frac{\delta_{j,n}} {x + \Delta_j} - \epsilon X
\end{array}
$$
Then given
$$
p_n = \frac
{1 - \sum_{j: i \leq j \leq n-1} \frac{\delta_{{(j)},n}}{x + \Delta_{(j)}}}
{\left ( 1 +  \sum_{j: i \leq j \leq n-1} \frac{x + \Delta_n}{x + \Delta_{(j)}} \right )}
= \frac{A}{B}
$$
Using the inequalities above
$$
F(x_{\epsilon})= \frac
{1 - \sum_{j: i \leq j \leq n-1} \frac{\delta_{{(j)},n}}{x _{\epsilon}+ \Delta_{(j)}}}
{\left ( 1 +  \sum_{j: i \leq j \leq n-1} \frac{x_{\epsilon} + \Delta_n}{x_{\epsilon} + \Delta_{(j)}} \right )}
> \frac{A - \epsilon C}{B + \epsilon D}
$$
If $p_n \leq \psi$, then since $F(x_{\epsilon}) \geq 0$, we have $F(x) \geq p_n - \psi$.
If $p_n \geq \psi$, then since $B > 1$, $A > \psi$ we have
$$
\frac{A - \epsilon X}{B + \epsilon Y} > \frac{A}{B}(1 - \epsilon(\frac{X}{A} + \frac{Y}{B}))
> p_n(1  - \epsilon(\frac{X}{\psi} + Y))
$$
And, as $p_n < 1$ we have
$
F(x_{\epsilon}) > p_n - \epsilon(\frac{X}{\psi} + Y)
$.
Thus, $F(x_{\epsilon}) > p_n - \epsilon \min\{\psi, \frac{X}{\psi} + Y\}$.
The minimum is less that the positive root of $\psi^2 - Y \psi - X = 0$, that is
$\frac{Y + \sqrt{Y^2 + 4X}}{2}$.
\end{proof}

\section{Dummy Target}
\begin{lemma} \label{dummy}
If a dummy target $t_0$ is present for the optimization problem 
described in Section~\ref{notation}, then $p_0 = 0$ at optimal point $p_n^o, x^o$, where
$x^o > 0$.
\end{lemma}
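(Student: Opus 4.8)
The plan is to exploit the fact that, once all of $t_0$'s costs (including its punishment) are zeroed out, $p_0$ is a pure slack variable. First I would observe that $p_0$ appears neither in the objective $p_n\Delta_{D,n}+U^u_D(t_n)-ax$ nor in any inequality constraint of $P_n$: the only constraint contributed by $t_0$ is $p_0(-x_0-\Delta_0)+p_n(x+\Delta_n)+\delta_{0,n}\le 0$, which collapses to $p_n(x+\Delta_n)+\delta_{0,n}\le 0$ since $x_0=\Delta_0=0$, and the remaining quadratic constraints mention only $p_1,\ldots,p_{n-1},p_n,x$. Hence, for fixed $(p_1,\ldots,p_n,x)$, feasibility requires only $\sum_{i=1}^{n}p_i\le 1$ with $p_0=1-\sum_{i=1}^{n}p_i$, so the lemma reduces to showing $\sum_{i=1}^{n}p_i=1$ at the optimum.

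Assume an optimal solution has $p_0^o>0$, and treat $n\ge 2$ and $n=1$ separately. For $n\ge 2$: since $\sum_{i=1}^{n}p_i^o<1$, not all of $p_1^o,\ldots,p_{n-1}^o$ equal $1$, so pick $i^*\le n-1$ with $p_{i^*}^o<1$. The only inequality containing $p_{i^*}$ is its own quadratic constraint, whose coefficient of $p_{i^*}$ is $-x^o-\Delta_{i^*}<0$ (recall $x^o>0$ and $\Delta_{i^*}>0$); raising $p_{i^*}$ therefore only slackens that constraint and touches nothing else. Iterating this over $j\in\{1,\ldots,n-1\}$ (there is enough capacity, $\sum_{j=1}^{n-1}(1-p_j^o)=(n-2)+p_n^o+p_0^o\ge p_0^o$) reaches $\sum_{i=1}^{n}p_i=1$, i.e.\ $p_0=0$, with the objective unchanged; so the optimum is attained with $p_0=0$. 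Generically one gets a strict improvement: if $p_n(x+\Delta_n)+\delta_{0,n}\le 0$ is slack at the optimum, a small increase of $p_n$, compensated by the increases of the tight-constraint probabilities and drawn from $p_0^o>0$ exactly as in the proof of Lemma~\ref{equalityorzero}, raises $p_n\Delta_{D,n}$, contradicting optimality; so a positive $p_0$ could persist only when that constraint is tight, and the redistribution above still removes it. For $n=1$ there are no quadratic constraints among $\{1,\ldots,n-1\}$, so besides the box constraints the only inequality is $p_n(x^o+\Delta_n)+\delta_{0,n}\le 0$; if it is slack, raising $p_n$ (possible since $p_n^o<1$) strictly increases the objective; if it is tight, then since $x^o>0$ and, with $p_n$ held fixed, this constraint is non-increasing in $x$ while nothing else bounds $x$ from below, decreasing $x$ strictly increases the objective. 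Either way we contradict optimality, so $p_n^o=1$ and $p_0^o=0$.

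I expect the delicate part to be the $n\ge 2$ sub-case in which $p_n(x+\Delta_n)+\delta_{0,n}\le 0$ is tight: there one can neither push mass onto $t_n$ (that breaks this constraint) nor, in general, lower $x$ (when $p_n^o<p_i^o$ for some tight quadratic constraint, lowering $x$ makes $t_i$ the better response), so the only admissible move is the probability-preserving redistribution onto $p_{i^*}$. Consequently the statement to establish is that there is an optimal solution with $p_0=0$, which is precisely what Algorithm~\ref{alg1} and the surrounding analysis need.
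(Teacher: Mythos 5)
There is a genuine gap. Your central move---shifting the mass of $p_0$ onto a single $p_{i^*}$, which leaves the objective unchanged---only establishes that \emph{some} optimal solution has $p_0=0$, and you concede as much in your closing sentence. Lemma~\ref{dummy} asserts that $p_0=0$ at \emph{any} optimal point, and the paper proves exactly that by deriving a strict improvement that your argument never produces. The mechanism you are missing is the $-ax$ term in the objective: the paper takes $\epsilon$ from $p_0$ and spreads it as $\epsilon/(n-1)$ over \emph{all} of $p_1,\dots,p_{n-1}$, not just one index. Since the coefficient of each $p_i$ in its own quadratic constraint is $-x^o-\Delta_i<0$, this makes \emph{every} quadratic constraint simultaneously strictly slack, and the extra constraint $p_n(x+\Delta_n)+\delta_{0,n}\le 0$ is unaffected (and only helped by lowering $x$, since $p_n\ge 0$). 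One can then decrease $x$ below $x^o>0$ by a small amount while remaining feasible, and because $a>0$ this \emph{strictly} increases the objective---contradiction. This disposes precisely of the case you flag as delicate (the $t_0$ constraint tight, so $p_n$ cannot be raised): your objection that lowering $x$ is blocked when $p_n^o<p_i^o$ for some tight constraint evaporates once the redistribution has made constraint $i$ strictly slack. Your single-index redistribution does not unlock this, because the remaining tight constraints stay tight and continue to block the decrease of $x$; hence you are left with an objective-preserving move and only the weaker existence statement.

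A secondary omission: the remark in Section~\ref{algorithm} also needs the claim for the sub-problem in which the dummy target itself is the assumed best response ($n$ representing $0$), where the objective degenerates to $-ax$ plus a constant and the constraints read $p_i(-x-\Delta_i)+\delta_{i,0}\le 0$. The paper handles this as a separate case with the same redistribute-then-decrease-$x$ argument; your proof never addresses it. Notably, your $n=1$ edge case does invoke the $-ax$ term correctly, so the right idea is within reach---it simply needs to be applied, after a uniform redistribution, in the main $n\ge 2$ case as well.
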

\begin{proof}
We prove a contradiction. Let $p_0 > 0$ at optimal point. 
If the problem under consideration is when $n$ represents $0$, the objective
is not dependent on $p_n$ and thus, then we want to choose
$x$ as small as possible to get the maximum objective value. The quadratic inequalities 
are $p_i (-x^o - \Delta_i)  + \delta_{i,0}  \leq 0$. Subtracting $\epsilon$ from $p_0$ and
adding $\epsilon/n$ to all the other $p_i$, satisfies the $\sum_i p_i = 1$ constraint.
But, adding  $\epsilon/n$ to all the other $p_i$, allows reducing $x^o$ by a small amount
and yet, satisfying each quadratic constraint. But, then we obtain a higher objective 
value, hence $x^o$ is not optimal.

Next, if the problem under consideration is when $n$ does not represents $0$, then an 
extra constraint is $ p_n^o (x^o + \Delta_n) + \delta_{0,n}  \leq 0$.  Subtracting $\epsilon$ from $p_0$ and
adding $\epsilon/(n-1)$ to all the other $p_i$ (except $p_n^o$), satisfies the $\sum_i p_i = 1$ constraint. Also, 
each constraint $ p_i (-x^o - \Delta_i) + p_n^o (x^o + \Delta_n) + \delta_{i,n}  \leq 0$ becomes 
non-tight (may have been already non-tight) as a result of increasing $p_i$. Thus, 
now $x^o$ can be decreased (note $x^o > 0$). Hence, the objective
increases, thus $p_n^o, x^o$ is not optimal.
\end{proof}

\end{document}